\newcommand{\R}{\mathbb{R}}
\DeclareMathOperator{\Tr}{Tr}
\DeclareMathOperator{\Rank}{Rank}
\newcommand{\subparagraph}{}
\theoremstyle{remark}
\newtheorem{remark}{Remark}
\newtheorem{thm}{Theorem}
\newtheorem{prop}{Proposition}
\begin{document}
\title{Resource Allocation for Multi-User Downlink MISO OFDMA-URLLC Systems}
\author{\IEEEauthorblockN{Walid R. Ghanem,Vahid Jamali, Yan Sun, and Robert Schober}
		\thanks{This paper was presented in part at IEEE ICC 2019 \cite{ghanem1}.}
	\thanks{The authors are with the Institute for Digital Communications, Friedrich-Alexander-University Erlangen-N\"urnberg (FAU), Germany (email: \{walid.ghanem, vahid.jamali, yan.sun, and robert.schober\}@fau.de).}}

\maketitle
\begin{abstract}
	This paper considers the resource allocation algorithm design for downlink multiple-input single-output (MISO) orthogonal frequency division multiple access (OFDMA) ultra-reliable low latency communication (URLLC) systems. To meet the stringent delay requirements of URLLC, short packet transmission is adopted and taken into account for resource allocation algorithm design. The resource allocation is optimized for maximization of the weighted system sum throughput subject to quality-of-service (QoS) constraints regarding the URLLC users' number of transmitted bits, packet error probability, and  delay. Despite the non-convexity of the resulting optimization problem, the optimal solution is found via monotonic optimization. The corresponding optimal resource allocation policy can serve as a performance upper bound for sub-optimal low-complexity solutions. We develop such a low-complexity resource allocation algorithm to strike a balance between performance and complexity. Our simulation results reveal the importance of using multiple antennas for reducing the latency and improving the reliability of URLLC systems. Moreover, the proposed sub-optimal algorithm is shown to closely approach the performance of the proposed optimal algorithm and outperforms two baseline schemes by a considerable margin, especially when the users have heterogeneous delay requirements. Finally, conventional resource allocation designs based on Shannon's capacity formula are shown to be not applicable in MISO OFDMA-URLLC systems as they  may violate the users'  delay constraints. 
\end{abstract}
 \section{Introduction} 
The fifth-generation (5G) wireless communication networks impose several different system design objectives including high data rates, high spectral efficiency, reduced latency, higher system capacity, and massive device connectivity. One important objective is to enable ultra-reliable low latency communication (URLLC). URLLC is required for mission critical applications such as factory automation, e-health, autonomous driving, tactile Internet, and augmented reality to facilitate real-time machine-to-machine and human-to-machine interaction \cite{Toward}. URLLC imposes strict quality-of-service (QoS) requirements including a very low latency (e.g., $1\,$ms) and a low packet error probability (e.g., $10^{-6})$ \cite{Toward}. In addition, the data packet size is typically small, e.g., around 160~bits \cite{Popovski1}. Existing mobile communication systems cannot meet these requirements. For example, for the long term evolution (LTE) system, the total frame time is 10$\,$ms, which exceeds the total latency requirement of URLLC applications \cite{Mehdi1}. The main challenges for the design of URLLC systems are the two contradicting requirements of low latency and ultra high reliability. For this reason, new design strategies are needed to enable URLLC. 

Modern communication systems employ multi-carrier transmission, e.g., orthogonal frequency division multiple access (OFDMA), due to its ability to exploit multi-user diversity, its robustness to multipath fading, and the flexibility it provides for the allocation of resources, such as power and bandwidth \cite{6251827}. Furthermore, multiple antenna technology  provides more degrees of freedom for resource allocation and facilitates multiplexing and diversity gains\cite{6251827}. Hence, future communication networks are expected to combine the concepts of multiple antennas, OFDMA, and URLLC. 

However, with the exception of our conference paper \cite{ghanem1}, the resource allocation algorithm design for OFDMA-URLLC systems has not been studied, yet. The authors in \cite{Seong1} studied the weighted sum rate maximization for multi-user downlink OFDMA systems. In \cite{enrgyefficient}, the authors studied the resource allocation algorithm design for energy-efficient communication in multi-cell OFDMA systems. The authors in \cite{multirelayOFDM} investigated the joint optimal power, sub-carrier, and relay node allocation in multi-relay assisted dual-hop cooperative orthogonal frequency division multiplexing (OFDM) systems. In \cite{5456049}, the authors studied the resource allocation for multiple-input single-output (MISO) OFDMA systems, where a base station (BS) equipped with multiple antennas served multiple single antenna users. However, the resource allocation algorithms proposed in \cite{Seong1,enrgyefficient, multirelayOFDM,5456049} were based on Shannon's capacity formula for the additive white Gaussian noise (AWGN) channel. Since URLLC systems employ a short frame structure and a small packet size to reduce latency, the relation between the achievable rate, decoding error probability, and transmission delay cannot be captured by Shannon's capacity formula which assumes infinite block length and zero error probability \cite{shannon}. If Shannon's capacity formula is utilized for resource allocation design for URLLC systems, the latency will be underestimated and the reliability will be overestimated, and as a result, the QoS requirements of the users cannot be met. Therefore, the results in \cite{Seong1,enrgyefficient, multirelayOFDM,5456049} and the related literature are not applicable for resource allocation in MISO OFDMA-URLLC systems. Hence, new resource algorithms for MISO OFDMA systems taking into account the specific properties and requirements of URLLC are needed, which is the main motivation for this paper.            

In recent years, the performance limits of short packet communication (SPC) \cite{thesis} have received significant attention in the literature. These performance limits provide a relationship between the achievable rate, decoding error probability, and packet length. The pioneering work in \cite{strassen} investigated the limits of SPC for discrete memoryless channels, while the authors in \cite{Polyanskiy} extended this analysis to different types of channels, including the AWGN channel and the Gilbert-Elliot channel. SPC for parallel Gaussian channels was analysed in \cite{thesis}, while in \cite{Erseghe1} an asymptotic analysis based on the Laplace integral was provided for the AWGN channel, parallel AWGN channels, and the binary symmetric channel (BSC). In \cite{Quasi}, the authors investigated the maximum achievable rate for SPC over quasi-static multiple-input multiple-output fading channels. The results in \cite{strassen,Polyanskiy,thesis,Erseghe1,Quasi} motivated the investigation of resource allocation design for SPC. In particular, optimal power allocation in a multi-user time division multiple access (TDMA) URLLC system was considered in \cite{optimal,wpspc,convexfinite}. In \cite{csunoptimizing}, the energy efficiency is maximized by optimizing the antenna configuration, bandwidth allocation, and power control under latency and reliability constraints. In \cite{chsecross},
a cross-layer framework based on the effective
bandwidth was proposed for optimal resource allocation under QoS constraints. The authors in \cite{miso} studied the joint uplink and downlink transmission design for URLLC in MISO systems. In \cite{Ultraa,Throughputcoh}, the authors studied a hybrid automatic repeat request (HARQ) scheme for URLLC systems.
  However, the above works\cite{optimal,convexfinite,chsecross,miso,csunoptimizing,Ultraa,Throughputcoh, wpspc,finitemop} assumed single carrier transmission which suffers from poor spectrum utilization and requires complex equalization at the receiver. Moreover, the optimization algorithms proposed in \cite{chsecross,miso} are based on a simplified version of the general  expression for the achievable rate of SPC \cite{Polyanskiy}. Thus, the optimal resource allocation for MISO OFDMA-URLLC systems is still an open problem.
  
In this paper, we study the resource allocation algorithm design for broadband downlink MISO OFDMA-URLLC systems, where a BS equipped with multiple antennas serves single antenna URLLC users. This paper makes the following main contributions:  
\begin{itemize}
	\item We propose a novel resource allocation algorithm design for multi-user  MISO OFDMA-URLLC systems. The resource allocation algorithm design is formulated as an optimization problem for maximization of the weighted sum throughput subject to QoS constraints for the URLLC users. The QoS constraints include the minimum number of transmitted bits, the maximum packet error probability, and the maximum time for transmission of a packet, i.e., the maximum delay\footnote{We note that the end-to-end (E2E) delay of data packet transmission comprises of various components including the transmission delay, queueing delay, propagation delay, and routing delay in the backhaul and core networks. In this work, we focus on the transmission delay, which is independent of the other components of the E2E delay.}.
	\item  The formulated optimization problem is a non-convex mixed-integer problem which is difficult to solve. However, we transform the problem into the canonical form of a monotonic optimization problem. This reformulation allows the application of the polyblock outer approximation method  to find the global optimal solution.
	\item To strike a balance between computational complexity and performance, we develop a low-complexity sub-optimal algorithm based on difference of convex programming and successive convex approximation to obtain a local optimal solution.
	\item Computer simulations show that the proposed sub-optimal algorithm closely approaches the performance of the optimal algorithm, despite its significantly lower complexity.  Furthermore, both algorithms achieve significant performance gains compared to two baseline schemes, especially if the users have heterogeneous delay requirements, as is expected for Internet-of-Things applications \cite{shannon}. Moreover, our results reveal that deploying multiple antennas is instrumental for achieving low latency and high reliability in URLLC systems.
\end{itemize}
  
   We note that this paper expands the corresponding conference version \cite{ghanem1} in several directions. First, in \cite{ghanem1}, resource allocation for single-antenna transceivers was considered, whereas in this paper, we study a system with a multiple-antenna BS. Moreover, in this paper, we derive the \textit{optimal} resource allocation policy for  MISO OFDMA-URLLC systems, whereas only a sub-optimal algorithm was provided in [1]. Finally, unlike \cite{ghanem1}, in this paper, we present extensive simulation results to illustrate the impact of the various system parameters on the performance of the proposed resource allocations algorithms.
    
The remainder of this paper is organized as follows. In
Section II, we present the considered system and channel models. In Section III, the proposed resource allocation problem is formulated. In Section IV, the optimal resource allocation algorithm is derived, whereas the low-complexity sub-optimal algorithm is provided in Section V. In Section VI, the performance of the proposed schemes is evaluated via computer simulations, and finally conclusions are drawn in Section VII.

\textit{Notation}: In this paper, lower-case letters refer to scalar numbers, while bold lower and upper case letters denote vectors and matrices, respectively. $\log_{2}(\cdot)$ is the logarithm with base 2. $\Tr{(\mathbf{A})}$ and $\Rank{(\mathbf{A})}$ denote the trace and the rank of matrix $\mathbf{A}$, respectively. $\mathbf{A}\succeq 0$  indicates that matrix $\mathbf{A}$ is positive semi-definite. $\mathbf{A}^{H}$ and ${\mathbf{A}}^{T}$ denote the Hermitian transpose and the transpose of matrix $\mathbf{A}$, respectively. $\R_{+}$ denotes the set of non-negative real numbers. $\mathbb{C}$ is the set of complex numbers. $\mathbf{I}_{N}$ is the $N \times N$ identity matrix. $\mathbb{H}_{N}$  denotes the set of all $N \times N$ Hermitian matrices. $|\cdot|$ and  $\|\cdot\|$ refer to the absolute value of a complex scalar and the Euclidean vector norm, respectively. The circularly symmetric complex Gaussian distribution with mean $\mu$ and variance $\sigma^{2}$ is denoted by $\mathcal{CN}(\mu,\sigma^{2})$, and $\sim$ stands for ``distributed as". $\mathcal{E}\{\cdot\}$ denotes statistical expectation. $\nabla_{x}f(\mathbf{x})$ denotes the gradient vector of function $f(\mathbf{x})$ and its elements are the partial derivatives of $f(\mathbf{x})$. $\mathbf{u}_{d}$ is the unit vector whose $d$-th entry is equal to $1$ and all other entries are equal to $0$. For any two vectors $\mathbf{x}$, $\mathbf{y}$ $\in$ $\R_{+}$, $\mathbf{x} \leq \mathbf{y}$ means $x_{i}\leq y_{i}$, $\forall i,$ where $x_{i}$ and $y_{i}$ are the $i$-th elements of $\mathbf{x}$ and $\mathbf{y}$, respectively.
\section{System and Channel Models}
In this section, we present the system and channel models adopted for  MISO OFDMA-URLLC in this paper.
\subsection{System Model}
We consider a single-cell downlink OFDMA system, where a BS equipped with $N_{T}$ antennas serves $K$ single-antenna URLLC users\footnote{ The URLLC users are assumed to employ a single antenna to ensure low hardware complexity.} indexed by $k =\{1,\dots,K\}$, cf. Fig.~\ref{model}(a). The frequency band is divided into $M$ orthogonal sub-carriers indexed by $m \in \{1,\dots,M\}$. We assume that a resource frame has a duration of $T_{\mathrm{f}}$ seconds, and consists of $N$ time slots\footnote{In current standards such as LTE, a typical sub-carrier bandwidth is 15~kHz which leads to an OFDM symbol duration of $T_{s}=66~\mu\textrm{s}$. Therefore, to meet a URLLC delay requirement of $1~\textrm{ms}$, $N$ has be smaller than 7. For larger sub-carrier spacing, larger values of $N$ are possible.} which are indexed by $n \in \{1,\dots,N\}$. Thereby, one OFDMA symbol spans one time slot, and in total $M \times N$ resource elements are available for assignment to the $K$ users, cf. Fig.~\ref{model}(b). We assume that the delay requirements of all users are known at the BS and only users whose delay requirements can potentially be met in the current resource block are admitted into the system. The maximum transmit power of the BS is $P_{\text{max}}$.
\subsection{Channel Model}
In this paper, we assume that the coherence time is larger than $T_{\mathrm{f}}$. Therefore, the channel gain for a given sub-carrier and a given transmit antenna remains constant for the considered $N$ time slots. The received signal at user $k$ on sub-carrier $m$ in time slot $n$ is  given as follows:
\begin{figure}
	\centering
	\scalebox{0.9}{
		\pstool{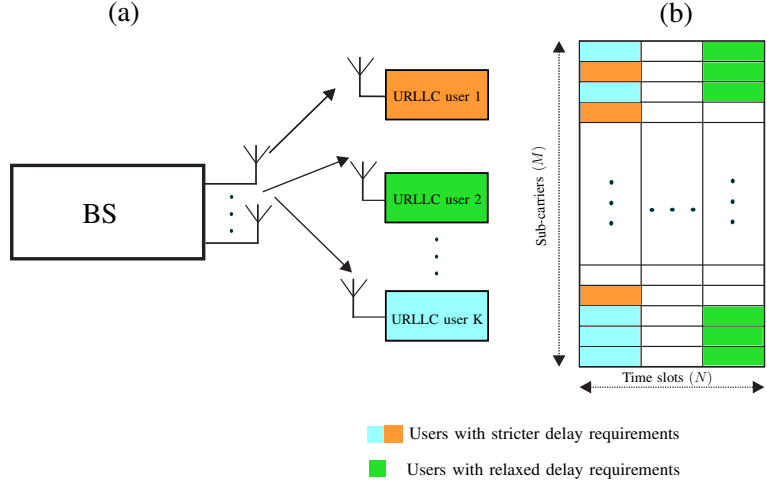}{
			\psfrag{a}[c][c][1]{\text{(a)}}
			\psfrag{b}[c][c][1]{\text{(b)}}
			\psfrag{B}[c][c][1]{\text{BS }}
			\psfrag{s}[c][c][0.5]{\text{Time slots $(N)$}}
			\psfrag{d}[c][c][0.5]{\text{Sub-carriers $(M)$}}
			\psfrag{k}[c][c][0.6]{\text{Users with stricter delay requirements}}
		   \psfrag{z}[c][c][0.6]{\text{Users with relaxed delay requirements}}
			\psfrag{e}[c][c][0.5]{\text{URLLC user 1}}
			\psfrag{f}[c][c][0.5]{\text{URLLC user 2}}
			\psfrag{c}[c][c][0.5]{\text{URLLC user K}}}}
	\caption{ Multi-user downlink  MISO OFDMA-URLLC: (a) System model with $N_{T}$-antenna BS and $K$ single-antenna users;    
	(b) Frame structure.}
\label{model}
\end{figure}
\begin{IEEEeqnarray}{lll}\label{receviedsignal}
	y_{k}[m,n]=\mathbf{h}_{k}^{H}{[m]}\mathbf{x}[m,n]+w_{k}[m,n],
\end{IEEEeqnarray}
where $\mathbf{h}_{k}[m] \in \mathbb{C}^{N_{T} \times 1}$ is the channel vector from the BS to user $k$ on sub-carrier $m$, $\mathbf{x}[m,n] \in \mathbb{C}^{N_{T} \times 1}$ is the signal vector transmitted by the BS on sub-carrier $m$ in time slot $n$. Moreover, $w_{k}[m,n]\sim \mathcal{CN}(0,\sigma^{2})$ is the complex AWGN\footnote{Without loss of generality, we assume that the noise variances for all URLLC users are identical.}. 
In this paper, we consider linear transmit precoding at the BS, where each user is assigned a beamforming vector. Hence, the transmit signal of the BS on sub-carrier $m$ in time slot $n$ is given by:
\begin{IEEEeqnarray}{lll}\label{txvector}
	\mathbf{x}[m,n]=\sum_{k=1}^{K}\mathbf{w}_{k}[m,n]u_{k}[m,n],
\end{IEEEeqnarray}
where $u_{k}[m,n] \in \mathbb{C}$ and $\mathbf{w}_{k}[m,n] \in \mathbb{C}^{N_{T} \times 1}$ are the transmit symbol and the beamforming vector of user $k$ on sub-carrier $m$ in time slot $n$, respectively. Moreover, without loss of generality, we assume that $\mathcal{E}\{|u_{k}[m,n]|^{2}\}=1, \; \forall k\in \{1,\dots, K\}$. By substituting (\ref{txvector}) into (\ref{receviedsignal}), the received signal at user $k$ on sub-carrier $m$ in time slot $n$ is given by: 
\begin{IEEEeqnarray}{lll}
	y_{k}[m,n]=\mathbf{h}_{k}^{H}[m]\left(\sum_{l=1}^{K}\mathbf{w}_{l}[m,n]u_{l}[m,n]\right)+w_{k}[m,n]\\ \ \ \quad
	=\underbrace{\mathbf{h}_{k}^{H}[m]\mathbf{w}_{k}[m,n]u_{k}[m,n]}_{\substack {\text {desired signal}}}+\underbrace{\sum_{l\neq k }\mathbf{h}^{H}_{k}[m]\mathbf{w}_{l}[m,n]u_{l}[m,n]}_{\substack {\text {multi-user interference (MUI)}}}+w_{k}[m,n].\nonumber
\end{IEEEeqnarray}
Moreover, the signal-to-interference-plus-noise-ratio (SINR) of user $k$ on sub-carrier $m$ in time slot $n$ is given as follows:
\begin{IEEEeqnarray}{lll}{\label{rho5}}
	\gamma_{k}[m,n]=\frac{|\mathbf{h}_{k}^{H}[m]\mathbf{w}_{k}[m,n]|^{2}}{\sum_{l\neq k}|\mathbf{h}_{k}^{H}[m]\mathbf{w}_{l}[m,n]|^{2}+\sigma^{2}}.
\end{IEEEeqnarray}
In this paper, we treat the interference caused by other users as noise. Moreover, to obtain a performance upper bound for MISO OFDMA-URLLC systems, for resource allocation, perfect channel state information (CSI) is assumed to be available at the BS. 
\section{Resource Allocation Problem Formulation}
In this section, we discuss the achievable rate for SPC, the QoS requirements of the URLLC users, and the adopted system performance metric for resource allocation algorithm design. Furthermore, we formulate the proposed resource allocation optimization problem for MISO OFDMA-URLLC systems. 
\subsection{Achievable Rate for SPC}
Shannon's capacity theorem, on which most conventional resource allocation designs are based, applies to the asymptotic case where the packet length approaches infinity and the decoding error probability goes to zero \cite{shannon}. Thus, it cannot be used for resource allocation design for URLLC systems, as URLLC systems have to employ short packets to achieve low latency, which also makes decoding errors unavoidable.

For performance evaluation of SPC, the so-called normal approximation for finite blocklength codes was developed in \cite{thesis}. Mathematically, the maximum number of bits $B$ conveyed in a packet comprising $L$ symbols can be approximated as\vspace*{-2mm}\cite[Eq. (4.277)]{thesis},\cite[Fig. 1]{Erseghe1}:
\begin{IEEEeqnarray}{lll}\label{normalapproximation}
	B=\sum_{i=1}^{L}\log_{2}(1+\gamma_{i})-Q^{-1}(\epsilon)\sqrt{\sum_{i=1}^{L}V_{i}},
\end{IEEEeqnarray}
where $\epsilon$ is the decoding packet error probability, $V_{i}$ is the channel dispersion, and $Q^{-1}(\cdot)$ is the inverse of the Gaussian Q-function which is given by $
Q(x)=\frac{1}{\sqrt{2\pi}}\int_{x}^{\infty}\text{exp}{\left(-\frac{t^{2}}{2}\right)}\text{d}t$. For the complex AWGN, the channel dispersion is given by \cite{thesis}\vspace*{-2mm}
\begin{IEEEeqnarray}{lll}\label{dispersion}
	V_{i}=a^{2}\bigg(1-{(1+\gamma_{i})^{-2}}\bigg),
\end{IEEEeqnarray} 
where $\gamma_{i}$ is the SINR of the $i$-th received symbol and  $a=\log_{2}(\text{e})$. 

 In this paper, we base the resource allocation algorithm design for downlink MISO OFDMA-URLLC systems on (\ref{normalapproximation}). Each resource element carries one symbol, and by allocating several resource elements from the available $L=M \times N$ resource elements to a given user, the number of bits received by the user with packet error probability $\epsilon$ can be determined based on (\ref{normalapproximation}). 
\subsection{QoS and System Performance Metric}
The QoS requirements of URLLC users include the minimum number of received bits, $B_{k}$, the target packet error probability, $\epsilon_{k}$, and the maximum number of time slots available for transmission of the user's packet,  $D_{k}$. According to (\ref{normalapproximation}), the total number of bits transmitted over the resources allocated to user $k$ can be written as:
\begin{IEEEeqnarray}{lll}\label{BT}
	\Psi_{k}(\mathbf{w}_{k})=F_{k}(\mathbf{w}_{k})-V_{k}(\mathbf{w}_{k}),
\end{IEEEeqnarray}
where
\begin{eqnarray}
F_{k}(\mathbf{w}_{k})&=&\sum_{m=1}^{M}\sum_{n=1}^{N}\log_{2}(1+\gamma_{k}[m,n]), \\
V_{k}(\mathbf{w}_{k})&=&Q^{-1}(\epsilon_{k})\sqrt{\sum_{m=1}^M \sum_{n=1}^N V_{k}[m,n]},
\end{eqnarray}
where the channel dispersion $V_{k}[m,n]$ is given by:
\begin{eqnarray}
V_{k}[m,n]=a^{2}\bigg(1-{(1+{\gamma_{k}}[m,n])}^{-2}\bigg).
\end{eqnarray}
Furthermore, $\mathbf{w}_{k}$ is the collection of all beamforming  vectors $\mathbf{w}_{k}[m,n]$, $\forall m,n,$ of user $k$.
 
The delay requirements of user $k$ can be met by assigning all symbols of user $k$ to the first $D_{k}$ time slots. In other words, users requiring low latency are assigned resource elements at the beginning of the frame, cf. Fig.~\ref{model}(b). We note that a user can start decoding as soon as it has received all OFDMA symbols that contain its data, i.e., after $D_{k}$ time slots.

In order to be able to control the fairness among the URLLC users, we adopt the weighted sum throughput as performance metric. In particular, the weighed sum throughput of the entire system is defined as:
\begin{IEEEeqnarray}{lll} \label{eq1}
	U(\mathbf{w}) =\sum_{k=1}^{K}\mu_{k}\Psi_{k}(\mathbf{w}_{k}) =F(\mathbf{w})-V(\mathbf{w}), 
\end{IEEEeqnarray}
where
\begin{IEEEeqnarray}{lll}
	\hspace*{-4mm} F(\mathbf{w})=\sum_{k=1}^{K}\mu_{k}F_{k}(\mathbf{w}_{k}), \hspace*{-2mm} \quad V(\mathbf{w})=\sum_{k=1}^{K}\mu_{k}V_{k}(\mathbf{w}_{k}),
\end{IEEEeqnarray}
and $\mu_{k}$ is the weight assigned to user $k$. Larger values of $\mu_{k}$ give a user a higher priority and, as a result, a higher throughput (i.e., more bits are transmitted to the user) compared to the other users. The value of the $\mu_{k}$ may be specified in the medium access control (MAC) layer and is assumed to be given in the following. Moreover, $\mathbf{w}$ is the collection of the beamforming vectors $\mathbf{w}_{k}$ of all users. 
\subsection{Optimization Problem Formulation}
In the following, we formulate a resource allocation optimization problem for maximization of the weighted sum throughput of the system subject to the QoS requirements of
each user regarding the received number of bits, the reliability,
and the latency. In particular, the proposed resource allocation policies are determined by solving the following optimization
problem:
\begin{IEEEeqnarray}{lll}\label{optimization1}\qquad &  \underset { {\mathbf {w}}}{ \mathop {\mathrm {maximize}}\nolimits }~ F(\mathbf{w})-V(\mathbf{w}) \\& \ \mbox {s.t.}~\mbox {C1:}\ \nonumber F_{k}(\mathbf{w}_{k})-V_{k}(\mathbf{w}_{k}) \geq B_{k}, \  \forall k, \\&\qquad \nonumber\mbox {C2:}\  \sum_{k=1}^K \sum_{m=1}^{M}\sum_{n=1}^{N}\|\mathbf{w}_{k}[m,n]\|^{2}\leq P_{\text{max}}, \\&  \nonumber\qquad \mbox {C3:}\ \mathbf{w}_{k}[m,n]=0, \quad \forall n> D_{k}, \forall k.\nonumber \qquad \end{IEEEeqnarray}
In (\ref{optimization1}), constraint $\mbox {C1}$ guarantees the transmission of a minimum number of $B_{k}$ bits to user $k$. Constraint $\mbox {C2}$ is the total power budget constraint of the BS. Finally, constraint $\mbox {C3}$ ensures that  user $k$ is served within the first $D_{k}$ time slots to meet its delay requirements. The problem in (\ref{optimization1}) is a non-convex optimization problem. The non-convexity is caused by the form of the SINR in (\ref{rho5}) and the non-convex normal approximation in (\ref{normalapproximation}) which appear in the cost function and constraint $\mbox {C1}$.
\begin{remark}
Resource allocation algorithm design for conventional, non-URLLC OFDMA systems is typically based on Shannon's capacity formula, i.e., $V(\mathbf{w})$ and $V_{k}(\mathbf{w}_{k})$ in (\ref{optimization1}) are absent \cite{Seong1,enrgyefficient, multirelayOFDM,5456049}. The presence of $V(\mathbf{w})$ and $V_{k}(\mathbf{w}_{k})$ makes problem (\ref{optimization1}) significantly more challenging to solve but is essential to capture the characteristics of OFDMA-URLLC systems.      
\end{remark}

There is no systematic approach to solving general non-convex problems optimally. However, in Section IV, we will show that based on a sequence of transformations, problem (\ref{optimization1}) can be solved optimally, by employing monotonic optimization. Moreover, in Section V, we develop a sub-optimal algorithm based on successive convex approximation and difference of convex programming to obtain close-to-optimal performance with low computational complexity.
\section{Optimal Solution of the Optimization Problem} 
In this section, we solve the optimization problem in (\ref{optimization1}) optimally based on monotonic optimization  \cite{misoyan}, which leads to an iterative resource allocation algorithm, where a semi-definite relaxation (SDR) problem is solved in each iteration. 
\subsection{Semi-Definite Programming Relaxation}
To facilitate the application of semi-definite programming (SDP), we define new variables $\mathbf{W}_{k}[m,n]=\mathbf{w}_{k}[m,n]\mathbf{w}_{k}^{H}[m,n]$ and $\mathbf{H}_{k}[m]=\mathbf{h}_{k}[m]\mathbf{h}_{k}^{H}[m]$, $\forall k,m,n$, and rewrite (\ref{optimization1}) in equivalent form as follows: 
\begin{IEEEeqnarray}{lll}\label{op2}&  \underset { {{\mathbf {W}}}}{ \mathop {\mathrm {maximize}}\nolimits }~ F({\mathbf{W}})-V({\mathbf{W}}) \\& \ \mbox {s.t.}~\mbox {C1:}\ \nonumber F_{k}({\mathbf{W}}_{k})-V_{k}({\mathbf{W}}_{k}) \geq B_{k}, \  \forall k,\qquad \\&\qquad \nonumber  \mbox {C2:}\  \sum_{k=1}^{K}\sum_{m=1}^{M} \sum_{n=1}^{N} \Tr({\mathbf{W}}_{k}[m,n])\leq P_{\text{max}}, \\&\qquad  \nonumber \mbox {C3:} \Tr({{\mathbf{W}}_{k}}[m,n]) = 0, \forall n> D_{k}, \forall k,  \nonumber\qquad \\&
\qquad	\mbox {C4:}\ {\mathbf{W}}_{k}[m,n]\succeq 0, \;\;\; \forall k,m,n, \nonumber 
	\\& \qquad \nonumber
	\mbox {C5:}\ \Rank({{\mathbf{W}}_{k}}[m,n]) \leq 1, \;\;\; \forall k,m,n,
\end{IEEEeqnarray} 
where 
\begin{IEEEeqnarray}{lll}
	F({\mathbf{W}})=\sum_{k=1}^{K}\mu_{k}\sum_{m=1}^{M}\sum_{n=1}^{N}\log_{2}\left(1+\gamma_{k}[m,n]\right), 
\end{IEEEeqnarray} 
\vspace{-0.5cm}
\begin{IEEEeqnarray}{lll}\hspace{-0.75cm}
	{V({\mathbf{W}})=\sum_{k=1}^{K}\mu_{k}aQ^{-1}(\epsilon_{k})\sqrt{\sum_{m=1}^{M}\sum_{n=1}^{N}\left(1-({1+\gamma_{k}[m,n])^{2}}\right)}}\;,
\end{IEEEeqnarray}
and
\begin{IEEEeqnarray}{lll}{\label{rho2}}
\gamma_{k}[m,n]=\frac{\Tr(\mathbf{H}_{k}[m]{\mathbf{W}}_{k}[m,n])}{\sum_{l \neq k}\Tr(\mathbf{H}_{k}[m]{\mathbf{W}}_{l}[m,n])+\sigma^{2}}.
\end{IEEEeqnarray}
We note that ${\mathbf{W}}_{k}[m,n]\succeq 0$ and $\Rank({{\mathbf{W}}_{k}}[m,n]) \leq 1,  \forall k,m,n,$ in constraints $\mbox{C4}$ and $\mbox{C5}$ are imposed to ensure that  $\mathbf{W}_{k}[m,n]=\mathbf{w}_{k}[m,n]\mathbf{w}_{k}^{H}[m,n]$ holds after optimization. Moreover, for simplicity of notation, we define ${\mathbf {W}_{k}}$ as the collection of all optimization variables $\mathbf{W}_{k}[m,n]$,~$\forall m,n$, and ${\mathbf{W}}$  as the collection of all ${\mathbf {W}_{k}},\forall k$.  
\subsection{Problem Transformation}
The objective function and constraint $\mbox{C1}$ in (\ref{op2}) have a complicated structure. To handle this complexity and to facilitate the application of monotonic optimization, we introduce a set of auxiliary variables  $z_{k}[m,n],\forall k,m,n$,  to bound the SINR from below, i.e.,
\begin{IEEEeqnarray}{lll}{\label{rho3}}
 0  \leq z_{k}[m,n]\leq \gamma_{k}[m,n] =\frac{f_{k}[m,n]({\mathbf {W}})}{g_{k}[m,n]({\mathbf {W}})}, \forall k,m,n,
\end{IEEEeqnarray} 
where $	f_{k}[m,n]({\mathbf {W}})$ and $g_{k}[m,n]({\mathbf {W}})$ are the numerator and denominator of the SINR in (\ref{rho2}) and are given respectively by
\begin{IEEEeqnarray}{lll}
	f_{k}[m,n]({\mathbf {W}})=\Tr(\mathbf{H}_{k}[m]{\mathbf{W}}_{k}[m,n]), \forall k,m,n,
\end{IEEEeqnarray}
\vspace{-1cm}
\begin{IEEEeqnarray}{lll}
	\qquad \qquad	g_{k}[m,n]({\mathbf {W}})=\sum_{l \neq k}\Tr(\mathbf{H}_{k}[m]{\mathbf{W}}_{l}[m,n])+\sigma^{2}, \forall k,m,n.
\end{IEEEeqnarray}

Let us replace $\gamma_{k}[m,n]$ by $z_{k}[m,n]$ in $F(\mathbf{W})$, $V(\mathbf{W})$, $F_{k}(\mathbf{W}_{k})$, and $V_{k}(\mathbf{W}_{k})$ and denote the resulting functions by  $F(\mathbf{z})$, $V(\mathbf{z})$, $F_{k}(\mathbf{z}_{k})$, and $V_{k}(\mathbf{z}_{k})$, respectively, i.e.,
\begin{align} 
	F({\mathbf{z}})=\sum_{k=1}^{K}\mu_{k}F_{k}(\mathbf{z}_{k}), 
\end{align}
\vspace{-0.7cm}
\begin{align} 
	V({\mathbf{z}})=\sum_{k=1}^{K}\mu_{k}V(\mathbf{z}_{k}),
\end{align}
\vspace{-0.7cm}
\begin{align}
\hspace{-2.2cm}	F_{k}(\mathbf{z}_{k})=\sum_{m=1}^{M}\sum_{n=1}^{N}\log_{2}(1+z_{k}[m,n]), \forall k,
\end{align} 
\vspace{-0.7cm}
\begin{align}
	V(\mathbf{z}_{k})=aQ^{-1}(\epsilon_{k})\sqrt{\sum_{m=1}^{M}\sum_{n=1}^{N}(1-(1+z_{k}[m,n])^{-2})}\;,
\end{align} 
where $\mathbf{z}_{k}$ denotes the collection of optimization variables $z_{k}[m,n], \ \forall m,n$, and $\mathbf{z}$ denotes the collection of optimization variables $\mathbf{z}_{k}, \forall k$. Using these notations, and after dropping rank constraint $\mbox{C5}$ in (\ref{op2}), we formulate a new optimization problem as follows:     	  
\begin{IEEEeqnarray}{lll}\label{opequivalnt}&  \underset { {{\mathbf {W}}}, \mathbf {z}}{ \mathop {\mathrm {maximize}}\nolimits }~ F({\mathbf{z}})-V({\mathbf{z}}) \\& \ \mbox {s.t.}~\mbox {C1:}\ \nonumber F_{k}({\mathbf{z}}_{k})-V_{k}({\mathbf{z}}_{k}) \geq B_{k},  \forall k, \\& \qquad \nonumber \mbox {C2-C4,}
		\\& \nonumber \qquad \mbox {C6:} \; z_{k}[m,n]\leq \frac{f_{k}[m,n]({\mathbf {W}})}{g_{k}[m,n]({\mathbf {W}})}, \forall k,m,n, \;
		\\& \nonumber \qquad \mbox {C7:} \; z_{k}[m,n] \geq 0. \;
\end{IEEEeqnarray}  

In the following, we first find an optimal solution for problem (\ref{opequivalnt}). Subsequently, we prove that problems (\ref{opequivalnt}) and (\ref{op2})  are equivalent, cf. \textbf{Proposition} \ref{pro1}. Hence, the solution obtained for problem (\ref{opequivalnt}) constitutes an optimal solution for problem (\ref{op2}), too. 
\color{black}

The main condition required for applying monotonic optimization is the monotonicity of the objective function and the constraints. We note that the objective function and constraint $\mbox {C1}$ in (\ref{opequivalnt}) are differences of two monotonic concave functions in the optimization variables $\mathbf{z}$, cf. Appendix~A. Hence, problem (\ref{opequivalnt}) can be transformed into the canonical form of a monotonic optimization problem in two steps:    
\begin{itemize}
	\item \textbf{Step 1:} To transform the objective function in (\ref{opequivalnt}) into a monotonic function, we note that the SINR in (\ref{rho3}) is upper bounded by ${z}_{\text{max},k}[m,n]$ \footnote{The right hand side of (\ref{ineq}) is obtained by allocating all available power $P_{\text{max}}$ to time slot $n$, sub-carrier $m$, and user $k$.}:
	\begin{IEEEeqnarray}{lll}\label{ineq}
	 z_{k}[m,n] \leq {z}_{\text{max},k}[m,n]\triangleq \frac{P_{\text{max}}}{\sigma^{2}}\Tr({\mathbf{H}_{k}[m,n]}), \forall k,m,n.
	\end{IEEEeqnarray}

	 Let us define $\mathbf{z}_{\text{max}}$ as the collection of all ${z}_{\text{max},k}[m,n]$. Since  $V(\mathbf{z})$ is monotonically increasing in $\mathbf{z}$, $\mathbf{z} \leq \mathbf{z}_{\text{max}}$ leads to $V(\mathbf{z}) \leq V(\mathbf{z}_{\text{max}})$. Therefore, $V(\mathbf{z})+t = V(\mathbf{z}_{\text{max}})$ holds, for some positive $t$. Hence, substituting $V(\mathbf{z})$ by $V(\mathbf{z}_{\text{max}})-t$, the optimization problem in (\ref{opequivalnt}) can be rewritten as follows:   	
	\begin{IEEEeqnarray}{lll}\label{optimization4bb1}& \underset {{{\mathbf {W}}}, \mathbf{z}, t}{ \mathop {\mathrm {maximize}}\nolimits }~ F({\mathbf{z}})+t-V(\mathbf{z}_{\text{max}}) \\& \ \mbox {s.t.}~ \nonumber \ \mbox{C1-C4, C6, C7},
		\\& \qquad ~\mbox {C8:}\ \nonumber t +V({\mathbf{z}}) \leq V({\mathbf{z}_{\text{max}}}), \\& \qquad ~\mbox {C9:}\ \nonumber t \geq 0.
	\end{IEEEeqnarray}

We note that at the optimal point constraint $\mbox{C8}$ holds with equality due to the monotonicity of the objective function with respect to auxiliary optimization variable $t$. 

\item \textbf{Step 2:} We use a similar approach as for transforming the cost function to transform  constraint $\mbox{C1}$ into a standard monotonic constraint. In particular, $V_{k}({\mathbf{z}}_{k}) +\zeta_{k} = V_{k}(\mathbf{z}_{{\text{max}},k})$ holds for some positive auxiliary optimization variable $\zeta_{k}$, where $\mathbf{z}_{{\text{max}},k}$ is the collection of the ${z}_{{\text{max}},k}, \forall m,n$. Therefore, by substituting $V_{k}({\mathbf{z}}_{k})$ by $  V_{k}(\mathbf{z}_{{\text{max},k}}) -\zeta_{k}$, constraint $\mbox{C1}$ can be transformed into two monotonic constraints as follows:
	\begin{IEEEeqnarray}{lll}\label{optimization4bs1}
	\; \; \qquad	\mbox {C1a:} \quad F_{k}({\mathbf{z}}_{k}) + \zeta_{k} \geq V_{k}(\mathbf{z}_{{\text{max},k}})+B_{k},\forall k, 
	\end{IEEEeqnarray}
\vspace{-1cm}
	\begin{IEEEeqnarray}{lll}\label{optimization4bs2}
		\mbox {C1b:} \quad V_{k}({\mathbf{z}}_{k}) +\zeta_{k} \leq V_{k}(\mathbf{z}_{{\text{max},k}}),\forall k.
	\end{IEEEeqnarray}
\end{itemize}
We note that the left hand sides of (\ref{optimization4bs1}) and (\ref{optimization4bs2}) are monotonically increasing functions. Hence, problem (\ref{optimization4bb1}) has been transformed to an equivlant  monotonic optimization problem as follows:
\begin{IEEEeqnarray}{lll}\label{optimization22b}& \underset {{{\mathbf {W}}},\mathbf{z},{t}, \boldsymbol{{\zeta}}}{ \mathop {\mathrm {maximize}}\nolimits }~ F({\mathbf{z}})+t\\&  \ \mbox {s.t.}~  \nonumber \mbox{C1a, C1b, C2-C4, C6-C9}, 
\end{IEEEeqnarray}
where $\boldsymbol{\zeta}$ is the collection of optimization variables $\zeta_{k},\ \forall k$. Note that, in (\ref{optimization22b}),  we removed the constant $V(\mathbf{z}_{\text{max}})$ from the objective function, because it has no effect on the optimal solution. Optimization problem (\ref{optimization22b}) has a monotonically increasing objective function and all constraints are monotonically increasing functions ($\mbox{C1b, C6, C8}$) or convex functions 
($\mbox{C1a, C2-C4, C7, C9}$). Therefore, (\ref{optimization22b}) belongs to the class of monotonic optimization problems\cite{yanmisoc,Tuy}, which can be solved using algorithms such as outer polyblock approximation. To facilitate the presentation of the proposed solution, we rewrite the problem (\ref{optimization22b}) in the canonical form of a monotonic optimization problem as follows:
\begin{IEEEeqnarray}{lll}\label{optimization5bb}
	\underset {{{\mathbf {W}}}, \mathbf{z}, {t},  \boldsymbol{\zeta}}{ \mathop {\mathrm {maximize}}\nolimits }~ F({\mathbf{z}})+t\\ \ \mbox {s.t.}~ ({{\mathbf {W}}},\mathbf{z},t,\mathbf{\zeta}) \in \mathcal{V},\nonumber
\end{IEEEeqnarray}
where the feasible set $\mathcal{V}=\mathcal{G} \cap \mathcal{H}$ is the intersection of the normal set $\mathcal{G}$ and the co-normal set $\mathcal{H}$ \cite{Zhangmonotonic}. The normal set $\mathcal{G}$ is given by:
\begin{IEEEeqnarray}{lll}\label{normalset}
	\mathcal{G}=\left\{{(\mathbf{z},t)|0 \leq z_{k}[m,n] \leq \frac{f_{k}[m,n]({\mathbf {W}})}{g_{k}[m,n]({\mathbf {W}})}, \forall k, m, n, \ \mathbf{z} \in \mathcal{Z},   \mathbf{W} \in \mathcal{W}} 
\right\},\end{IEEEeqnarray}
with $\mathcal{Z}$ and $\mathcal{W}$ being the feasible set spanned by constraints ($\mbox{C1b, C2-C4, C6-C9}$). The co-normal set $\mathcal{H}$ is defined by constraint $\mbox {C1a}$. Now, we are ready to design the optimal resource allocation algorithm based on the polyblock outer approximation algorithm \cite{yanmisoc}.
\subsection{Polyblock Algorithm}
Due to the monotonicity of the objective function and the constraints, the optimal solution of  optimization problem (\ref{optimization5bb}) is at the boundary of the feasible set $\mathcal{V}$ \cite{Tuymonotonic,emiloptimal,Zhangmonotonic}. However, the boundary of the feasible set is unknown. Thus, we approach the boundary from above by enclosing the feasible set  $\mathcal{V}$ by an initial polyblock $\mathcal{B}^{(1)}$ with an initial vertex $\mathbf{v}^{(1)}=\left( \mathbf{z}_{1}^{(1)},t_{1}^{(1)}\right)$ as shown in Fig.~\ref{polyblockfig}(a), where for simplicity of illustration, we consider a case with only two dimensions $t_{1}$ and $z_{1}$ to depict the polyblock algorithm. Subsequently, the intersection point $\Phi(\cdot)$ between the vertex and the origin is calculated, and the new polyblock $\mathcal{B}^{(2)}$ is now defined by three vertices $\mathbf{v}^{(1)},\tilde{\mathbf{v}}^{(1)}$, and $\tilde{\mathbf{v}}^{(2)}$, see Fig.~\ref{polyblockfig}(b). Since, vertex $\mathbf{v}^{(1)}$ has no effect on polyblock $\mathcal{B}^{(2)}$, we can remove it, see Fig.~\ref{polyblockfig}(c). This process is continued until the feasible set $\mathcal{V}$ is enclosed by a final polyblock $\mathcal{B}^{(1)} \supset \mathcal{B}^{(2)} \supset \dots \supset \mathcal{V}$. Finally, we select the vertex that maximizes the objective function in (\ref{optimization5bb}). This procedures is summarized in \textbf{Algorithm} \ref{polyblocka}. \textbf{Algorithm} \ref{polyblocka} requires the calculation of intersection point $\Phi(\cdot)$ in each iteration, which is performed by \textbf{Algorithm} \ref{intersectiona}, as explained in the following.

\begin{figure}
	\centering
	\scalebox{0.5}{
		\pstool{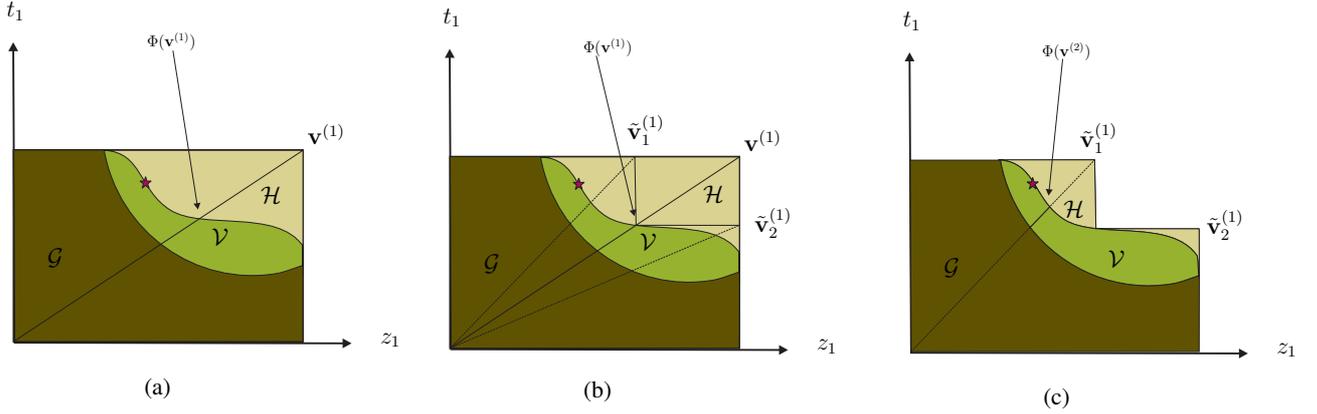}{
			\psfrag{t2}[c][c][1.5]{$t_{1}$}
			\psfrag{z1}[c][c][1.5]{$z_{1}$}
		\psfrag{a}[c][c][1.5]{\text{(a)}}
		\psfrag{b}[c][c][1.5]{\text{(b)}}
			\psfrag{c}[c][c][1.5]{\text{(c)}}
			\psfrag{g}[c][c][1.5]{$\mathcal{G}$}
			\psfrag{v}[c][c][1.5]{$\mathcal{V}$}
			\psfrag{h}[c][c][1.5]{$\mathcal{H}$}
			\psfrag{w}[c][c][1.5]{$\mathbf{v}^{(1)}$}
			\psfrag{d}[c][c][1.5]{$\tilde{\mathbf{v}}^{(1)}_{1}$}
		\psfrag{z}[c][c][1.5]{$\tilde{\mathbf{v}}^{(1)}_{2}$}
			\psfrag{m}[c][c][1.5]{$\tilde{\mathbf{v}}^{(2)}_{2}$}
			\psfrag{m}[c][c][1.5]{$\mathbf{v}^{(2)}$}
			\psfrag{I}[c][c][1]{$\Phi(\mathbf{v}^{(1)})$}
		\psfrag{o}[c][c][1]{$\Phi(\mathbf{v}^{(2)})$}}}
	\caption{The polyblock outer approximation algorithm. $\mathcal{G}$ is the normal set, $\mathcal{H}$ is the co-normal set, and $\mathcal{V}=\mathcal{G} \cap \mathcal{H} $. $\Phi(\cdot)$ is the intersection point between the boundary and the line connected a vertex with the origin. The star is the optimal solution located at the boundary of the feasible set $\mathcal{V}$.}
		\label{polyblockfig}
\end{figure}
\begin{figure}
\end{figure}
\begin{algorithm}[t]
	\caption{Polyblock Outer Approximation Algorithm}
	\begin{algorithmic}[1]
		\renewcommand{\algorithmicrequire}{\textbf{Input:}}
		\renewcommand{\algorithmicensure}{\textbf{Output:}}
		\STATE Initialize polyblock $\mathcal{B}^{(1)}$ with vertex set $\mathbf{v}^{(1)}$ =$\{\mathbf{z}^{(1)},t^{(1)}\}$, where the elements of $\mathbf{z}^{(1)}$ are   $z_{k}^{(1)}[m,n]=\Tr(\mathbf{H}_{k}[m])P_{\text{max}}, \forall k,m,n,$ and $t^{(1)}=V(\mathbf{z}_{\text{max}})$,
		\STATE Set error tolerance $\rho \ll 1$ and iteration index $j=1$.
		\STATE	\textbf{Repeat}\{Main Loop\}\\
		\STATE Construct a smaller polyblock $\mathcal{B}^{(j+1)}$ with vertex set $\mathbf{v}^{(j+1)}$ by replacing  $\mathbf{v}^{(j)}$ with $D=K\times M \times N+1$ new vertices \{$\tilde{\mathbf{v}}_{1}^{(j)},\dots,\tilde{\mathbf{v}}_{D}^{(j)}$\} , $d \in \ \{1,\dots,D\}$. The new vertex $\tilde{\mathbf{v}}_{d}^{(j)}$ is generated as
			\begin{equation*}
			\tilde{\mathbf{v}}_{d}^{(j)}=\mathbf{v}^{(j)}-\left({v}_{d}^{(j)}-\phi_{d}(\mathbf{v}^{(j)}) \right)\mathbf{u}_{d},
			\end{equation*}
			where ${v}_{d}^{(j)}$ and $\phi_{d}(\mathbf{v}^{(j)})$ are the $d$-th elements of $\mathbf{v}^{(j)}$ and $\Phi(\mathbf{v}^{(j)})$, respectively. $\Phi(\mathbf{v}^{(j)})$ is obtained by \textbf{Algorithm} \ref{intersectiona}.
			\STATE Find $\mathbf{v}^{(j+1)}$ as that vertex of $\mathcal{V}^{(j+1)}$ whose intersection maximizes the objective function of problem 
				\begin{equation*}
				\mathbf{v}^{(j+1)}=\underset{{\mathbf{v} \in \mathcal{V}^{(j+1)}}}{\text{argmax}}\{F({\mathbf{z}})+t\},
				\end{equation*} 
		         \STATE Set $j=j+1$\\
		         \STATE \textbf{until}  $\frac{\|\mathbf{v}^{(j)}-\Phi(\mathbf{v}^{(j)})\|}{\|\mathbf{v}^{(j)}\|} \leq \rho$\\
		         \STATE Optimal vertex $\mathbf{v}^{*}=\Phi(\mathbf{v}^{(j)})$ and optimal beamforming matrix $\mathbf{W}^{*}$ are obtained when calculating $\Phi(\mathbf{v}^{(j)})$ with \textbf{Algorithm} \ref{intersectiona}.
	          	\end{algorithmic} 
		         	\label{polyblocka}
\end{algorithm}
\begin{algorithm}[t]
	\caption{Optimal Intersection Algorithm via Bisection Method}
	\begin{algorithmic}[1]
		\renewcommand{\algorithmicrequire}{\textbf{Input:}}
		\STATE {Initialize} feasible set $\mathcal{V
		}$, vertex $\mathbf{v}^{(j)}=\{\mathbf{z}^{(j)},t^{(j)}\}$.
		 $\lambda_{\text{min}}=0$ and $\lambda_{\text{max}}=1$ 
		 \STATE Set error tolerance $\delta\ll 1$.
		\WHILE{$(\lambda_{\text{max}} -\lambda_{\text{min}}) \geq \delta$}
		\STATE  ${\lambda}=(\lambda_{\text{max}} +\lambda_{\text{max}} )/2$.
		\STATE Check the feasibility of problem (\ref{intersection}) using, e.g., CVX, and check if  ${\lambda}t^{(j)} + V({\lambda}{\mathbf{\mathbf{z}}^{(j)}}) \leq V(\mathbf{z}_{\text{max}}),  0 \leq  {\lambda} t^{(j)} \leq V(\mathbf{z}_{\text{max}})$.\label{line5}
		\IF{ the two conditions in line $\ref{line5}$ are satisfied} 
		\STATE   set  ${\lambda}=\lambda_{\text{min}}$
		\ELSE
		\STATE set ${\lambda}=\lambda_{\text{max}}$
		\ENDIF
		\ENDWHILE \\
		\STATE ${\lambda}=\lambda_{\text{min}}$,   $\Phi(\mathbf{v}^{(j)})={\lambda}{\mathbf{v}}^{(j)}.$
	\end{algorithmic} 
\label{intersectiona}
\end{algorithm}
\subsection{Calculation of Intersection Point}  
In each iteration of \textbf{Algorithm} \ref{polyblocka}, we determine the intersection of the line connecting $\mathbf{v}^{(j)}$ and the origin with the boundary of the feasible set, cf. Fig.~\ref{polyblockfig}(a). In other words, we have to find a $\lambda > 0$ which satisfies $\Phi(\mathbf{v}^{(j)})=\lambda \mathbf{v}^{(j)}$. $\lambda$ can be obtained based on the following optimization problem:
\begin{IEEEeqnarray}{lll}\label{i1}				
		{ \mathop {\mathrm {maximize}}\nolimits } \qquad \lambda \\  
		 \mbox {s.t.}~ \quad \lambda \mathbf{v}^{(j)} \in \mathcal{V}\nonumber.	
\end{IEEEeqnarray}
To solve (\ref{i1}), we use the bisection method which is formally presented in \textbf{Algorithm} \ref{intersectiona}. In particular, in line $\ref{line5}$ of \textbf{Algorithm} \ref{intersectiona}, we solve the following SDP problem: 
\begin{IEEEeqnarray}{lll}\label{intersection}
	\underset {{{\mathbf {W}}}, \boldsymbol{\zeta}}{ \mathop {\mathrm {maximize}}\nolimits }~ 1\\ \mbox {s.t.}~\mbox {C1a:}\; F_{k}({\lambda}{\mathbf{z}}^{(j)}_{k}) + \zeta_{k} \geq V_{k}({\mathbf{z}_{\text{max},k}})+B_{k}, \forall k, \nonumber
	\\ \;\;\quad \mbox {C1b:} \; V_{k}({\lambda}{\mathbf{z}}^{(j)}_{k}) +\zeta_{k} \leq V_{k}({\mathbf{z}_{\text{max},k}}),  \forall k, \nonumber
	\\ \;\;\quad\nonumber  \mbox {C2-C4, C7},\\ \;\; \quad
	\mbox {C6:}\  ({\lambda}{z}^{(j)}_{k}[m,n]) g_{k}[m,n]({\mathbf {W}}) -f_{k}[m,n]({\mathbf {W}}) \leq 0,   \forall k,m,n\nonumber.
\end{IEEEeqnarray}
 Problem (\ref{intersection}) is a convex optimization problem which can be solved using standard optimization software tools such as CVX \cite{cvx}. Moreover, the tightness of the applied SDR is revealed in the following theorem.

\begin{thm}\label{th1}
	The optimal $\mathbf{W}_{k}[m,n], \forall k,m,n,$ as the solution of (\ref{intersection}) has a rank less than or equal to one, i.e.,  $\Rank(\mathbf{W}_{k}[m,n]) \leq 1, \forall k,m,n$.
\end{thm}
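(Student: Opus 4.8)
The plan is to establish the rank-one property through the Karush--Kuhn--Tucker (KKT) optimality conditions of the convex problem (\ref{intersection}), focusing on the dual variable associated with the positive semi-definiteness constraint $\mbox{C4}$. First I would observe that in (\ref{intersection}) both $\lambda$ and the vertex components $z_k^{(j)}[m,n]$ are fixed constants, so that $F_k(\lambda\mathbf{z}_k^{(j)})$ and $V_k(\lambda\mathbf{z}_k^{(j)})$ are constants; consequently $\mbox{C1a}$ and $\mbox{C1b}$ only restrict the auxiliary variables $\zeta_k$ and fully decouple from $\mathbf{W}$. The constraints that actually couple the beamforming matrices are therefore only $\mbox{C2}$, $\mbox{C3}$, $\mbox{C4}$, and the (now linear) constraint $\mbox{C6}$. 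Since (\ref{intersection}) is convex and Slater's condition can be assumed to hold, strong duality applies and I can invoke the KKT conditions at the optimal $\mathbf{W}$.

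Next I would form the Lagrangian, assigning the multiplier $\beta \geq 0$ to the power budget $\mbox{C2}$, the matrix multiplier $\mathbf{Y}_k[m,n] \succeq 0$ to $\mbox{C4}$, and the scalar multiplier $\theta_k[m,n] \geq 0$ to $\mbox{C6}$; the equality constraint $\mbox{C3}$ simply forces $\mathbf{W}_k[m,n] = \mathbf{0}$ for $n > D_k$, which are trivially of rank zero. Differentiating the Lagrangian with respect to $\mathbf{W}_k[m,n]$ and setting the gradient to zero yields the stationarity condition
\begin{equation*}
\mathbf{Y}_k[m,n] = \beta\mathbf{I}_{N_T} + \sum_{l \neq k}\theta_l[m,n]\lambda z_l^{(j)}[m,n]\mathbf{H}_l[m] - \theta_k[m,n]\mathbf{H}_k[m].
\end{equation*}
Defining $\mathbf{A}_k[m,n] \triangleq \beta\mathbf{I}_{N_T} + \sum_{l \neq k}\theta_l[m,n]\lambda z_l^{(j)}[m,n]\mathbf{H}_l[m]$, which is a sum of a scaled identity and positive semi-definite rank-one terms $\mathbf{H}_l[m] = \mathbf{h}_l[m]\mathbf{h}_l^H[m]$, this reads $\mathbf{Y}_k[m,n] = \mathbf{A}_k[m,n] - \theta_k[m,n]\mathbf{H}_k[m]$.

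The crux of the argument is to show $\mathbf{A}_k[m,n] \succ 0$, i.e.\ that it has full rank $N_T$. Provided $\beta > 0$, this is immediate since $\mathbf{A}_k[m,n] \succeq \beta\mathbf{I}_{N_T} \succ 0$. Because $\mathbf{H}_k[m]$ is rank one, subtracting $\theta_k[m,n]\mathbf{H}_k[m]$ can lower the rank by at most one, so by the sub-additivity of rank $\Rank(\mathbf{Y}_k[m,n]) \geq N_T - 1$, and hence the null space of $\mathbf{Y}_k[m,n]$ has dimension at most one. The complementary slackness condition $\mathbf{Y}_k[m,n]\mathbf{W}_k[m,n] = \mathbf{0}$ forces the range of $\mathbf{W}_k[m,n]$ into this null space, whence $\Rank(\mathbf{W}_k[m,n]) \leq 1$, which is the claim. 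If instead $\Rank(\mathbf{Y}_k[m,n]) = N_T$, then $\mathbf{W}_k[m,n] = \mathbf{0}$, which also satisfies the bound.

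I expect the main obstacle to be rigorously establishing $\beta > 0$, since (\ref{intersection}) has a constant objective and the generic KKT point admits the trivial multipliers $\beta = \theta_k[m,n] = 0$, which would leave $\mathbf{A}_k[m,n]$ possibly rank-deficient and the stationarity relation vacuous. I would resolve this by noting that the bisection in \textbf{Algorithm}~\ref{intersectiona} returns the solution at the boundary of the feasible set, where the power constraint $\mbox{C2}$ is necessarily active. The cleanest route is to analyse the equivalent power-minimization problem, i.e.\ minimize $\sum_{k,m,n}\Tr(\mathbf{W}_k[m,n])$ subject to the same SINR and QoS constraints, whose feasibility is identical to that of (\ref{intersection}) and whose KKT stationarity produces $\mathbf{A}_k[m,n] = \mathbf{I}_{N_T} + \sum_{l \neq k}\theta_l[m,n]\lambda z_l^{(j)}[m,n]\mathbf{H}_l[m] \succ 0$ automatically, with the identity term arising from the gradient of the objective rather than from $\beta$. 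This guarantees a rank-one optimizer that is feasible for (\ref{intersection}), completing the proof.
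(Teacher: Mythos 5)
Your proposal follows the same overall strategy as the paper's Appendix~B: invoke strong duality for the convex SDP (\ref{intersection}), write the KKT stationarity condition for $\mathbf{W}_{k}[m,n]$, and combine it with the complementary slackness condition $\mathbf{Y}_{k}[m,n]\mathbf{W}_{k}[m,n]=\mathbf{0}$ to force $\Rank(\mathbf{W}_{k}[m,n])\leq 1$. Two of your refinements deserve comment. First, your stationarity condition correctly retains the cross-interference terms $\sum_{l\neq k}\theta_{l}[m,n]\lambda z_{l}^{(j)}[m,n]\mathbf{H}_{l}[m]$, which arise because $g_{l}[m,n](\mathbf{W})$ depends on $\mathbf{W}_{k}[m,n]$ for $l\neq k$; the paper's condition (\ref{kktcondition}) omits them. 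Since these terms are positive semi-definite they only strengthen the positive definiteness of your $\mathbf{A}_{k}[m,n]$, so the conclusion is unaffected, but your version is the complete one, and your ``null space of $\mathbf{Y}_{k}[m,n]$ has dimension at most one'' step is equivalent to the paper's trick of right-multiplying the stationarity relation by $\mathbf{W}_{k}^{*}[m,n]$. Second, and more substantively, you correctly flag the degeneracy caused by the constant objective of (\ref{intersection}): it is a pure feasibility problem, so the KKT system admits the all-zero multipliers at a strictly feasible point, in which case stationarity is vacuous and an arbitrary optimal (i.e., feasible) $\mathbf{W}_{k}[m,n]$ need not be rank one. The paper's treatment of the corresponding case $\alpha+\eta_{k}[n]=0$ — arguing that $\eta_{k}[n]=0$ makes $\mbox{C3}$ active and hence $\mathbf{W}_{k}^{*}[m,n]=\mathbf{0}$ — does not follow (a zero multiplier does not imply an active constraint, and $\mbox{C3}$ is not even imposed for $n\leq D_{k}$). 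Your repair via the equivalent power-minimization problem, whose objective gradient supplies the identity term in $\mathbf{A}_{k}[m,n]$ and excludes the degenerate multipliers, is the standard and sound fix; the only caveat is that it establishes the \emph{existence} of a rank-one optimizer (which is all the algorithm requires) rather than rank-one-ness of every solution of (\ref{intersection}) as the theorem is literally phrased.
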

\begin{proof}
	Please refer to Appendix~B.
\end{proof}

\begin{prop}\label{pro1}
	Optimization problems (\ref{opequivalnt}) and (\ref{op2}) are equivalent in the sense that they yield the same solution for the beamforming matrix $\mathbf{W}_{k}[m,n], \forall k,m,n$.
\end{prop}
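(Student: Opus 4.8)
The plan is to prove the equivalence by showing that the two problems attain the same optimal value and that this value is realized by a common beamforming matrix. I would do this through a pair of inequalities between the optimal values, constructing in each direction a feasible point of one problem from an optimizer of the other while preserving the objective. The two structural differences between (\ref{opequivalnt}) and (\ref{op2}) that must be reconciled are the introduction of the auxiliary SINR variables with the lower-bound constraint $\mbox{C6}$, and the removal of the rank constraint $\mbox{C5}$.

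First I would establish that the optimal value of (\ref{opequivalnt}) is at least that of (\ref{op2}). Given any $\mathbf{W}$ feasible for (\ref{op2}), set $z_{k}[m,n]=\gamma_{k}[m,n](\mathbf{W})$ for all $k,m,n$. Then $\mbox{C6}$ holds with equality and $\mbox{C7}$ holds because the SINR is nonnegative; constraints $\mbox{C2}$--$\mbox{C4}$ are inherited verbatim; the dropped rank constraint $\mbox{C5}$ can only enlarge the feasible set; and $\mbox{C1}$ is preserved because $F_{k}(\mathbf{z}_{k})-V_{k}(\mathbf{z}_{k})=F_{k}(\mathbf{W}_{k})-V_{k}(\mathbf{W}_{k})$. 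Since this substitution leaves the objective unchanged, $(\mathbf{W},\mathbf{z})$ is feasible for (\ref{opequivalnt}) with the same value, which yields the first inequality.

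The substance of the argument is the reverse inequality. Let $(\mathbf{W}^{*},\mathbf{z}^{*})$ be optimal for (\ref{opequivalnt}); by $\mbox{C6}$ we have $z_{k}^{*}[m,n]\le\gamma_{k}[m,n](\mathbf{W}^{*})$, possibly with strict inequality on some resource elements, so $\mathbf{W}^{*}$ itself need not be feasible for (\ref{op2}) with a matching objective. I would resolve this by realizing the target SINR profile $\mathbf{z}^{*}$ \emph{exactly}, rather than trying to raise $\mathbf{z}^{*}$ up to $\gamma_{k}[m,n](\mathbf{W}^{*})$ --- the latter route is not viable because the objective $F(\cdot)-V(\cdot)$ is a difference of two increasing functions and is therefore not monotone in the SINR. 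Concretely, I would consider the minimum-total-power beamforming problem with SINR targets $z_{k}^{*}[m,n]$ subject to $\mbox{C2}$--$\mbox{C4}$. This problem is feasible, since $\mathbf{W}^{*}$ already meets these targets within the power budget, and at its optimizer $\mathbf{W}^{\dagger}$ every SINR constraint is active (any slack target would allow a power reduction, while lowering one user's power only relaxes the co-channel interference of the others); hence $\gamma_{k}[m,n](\mathbf{W}^{\dagger})=z_{k}^{*}[m,n]$ for all $k,m,n$, and the targets that are zero (in particular those forced by the delay structure for $n>D_{k}$) are met with zero beamformers, so $\mbox{C3}$ is respected. Invoking the SDR tightness of Theorem~\ref{th1} for this min-power program, $\mathbf{W}^{\dagger}$ can be taken rank-one, so $\mbox{C5}$ holds as well.

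Finally I would verify that $\mathbf{W}^{\dagger}$ certifies the reverse inequality. Since $\gamma_{k}[m,n](\mathbf{W}^{\dagger})=z_{k}^{*}[m,n]$, constraint $\mbox{C1}$ of (\ref{op2}) reduces to $F_{k}(\mathbf{z}^{*}_{k})-V_{k}(\mathbf{z}^{*}_{k})\ge B_{k}$, which is exactly $\mbox{C1}$ of (\ref{opequivalnt}), and the objective of (\ref{op2}) at $\mathbf{W}^{\dagger}$ equals $F(\mathbf{z}^{*})-V(\mathbf{z}^{*})$, the optimal value of (\ref{opequivalnt}). Thus $\mathbf{W}^{\dagger}$ is feasible for (\ref{op2}) and achieves that value, giving the second inequality and hence equality of the two optimal values; as $\mathbf{W}^{\dagger}$ solves (\ref{op2}) and also realizes the optimizer of (\ref{opequivalnt}), the two problems share the same beamforming solution. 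I expect the main obstacle to be precisely the rigorous justification that $\mathbf{z}^{*}$ is \emph{exactly} attainable by a feasible, rank-one $\mathbf{W}^{\dagger}$: the non-monotonicity of $F-V$ rules out the naive "push the SINR up to $\mbox{C6}$'' argument and instead forces the min-power "pull the SINR down'' construction, with the activeness of the SINR constraints at the min-power optimum and the rank-one property from Theorem~\ref{th1} doing the real work.
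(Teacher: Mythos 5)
Your proposal is correct in substance but follows a genuinely different route from the paper. The paper argues directly about the solution produced by its algorithm: condition (ii) (rank of $\mathbf{W}_{k}[m,n]$ at most one) is inherited from Theorem~\ref{th1} because \textbf{Algorithm}~\ref{polyblocka} obtains $\mathbf{W}$ by repeatedly solving (\ref{intersection}), and condition (i) (tightness of $\mbox{C6}$, i.e., $z_{k}[m,n]=\gamma_{k}[m,n]$) is deduced from the fact that the optimum of a monotonic optimization problem lies on the boundary of the feasible set. You instead give an algorithm-independent, two-sided value comparison, and your key construction --- realizing the target profile $\mathbf{z}^{*}$ \emph{exactly} via a minimum-power beamforming problem whose SINR constraints are all active at its optimum --- is a clean way to handle the direction the paper treats only implicitly. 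Your observation that one cannot simply ``push $\mathbf{z}$ up to $\gamma(\mathbf{W}^{*})$'' because $F-V$ is a difference of two increasing functions is well taken and is precisely the delicate point that the paper's boundary argument passes over quickly (the monotone objective in the paper is $F(\mathbf{z})+t$ of the reformulated problem, not $F(\mathbf{z})-V(\mathbf{z})$, and a point on the upper boundary of a normal set need not be tight in every coordinate). The one loose end in your argument is the appeal to Theorem~\ref{th1} for the rank-one property of the min-power optimizer: that theorem is stated for the feasibility SDP (\ref{intersection}), not for your power-minimization program, so you would need to rerun the KKT analysis of Appendix~B for your program (it goes through by the standard Bengtsson--Ottersten-type argument, since the stationarity condition again expresses the dual matrix $\mathbf{Y}_{k}[m,n]$ as a positive definite matrix minus a rank-one term). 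What your approach buys is a statement about the optimization problems themselves rather than about the output of a particular solver; what the paper's approach buys is brevity and a proof that directly certifies the matrices actually returned by \textbf{Algorithms}~\ref{polyblocka} and~\ref{intersectiona}.
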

\begin{proof}
The solution of (\ref{opequivalnt}) is the same as that of (\ref{op2}) if 
i) constraint $\mbox{C6}$ in (\ref{opequivalnt}) holds with equality and ii) $\mathbf{W}_{k}[m,n], \forall k,m,n,$ obtained from (\ref{opequivalnt}) has rank smaller than or equal to one. Problem (\ref{opequivalnt}) is solved with \textbf{Algorithm} \ref{polyblocka} where in each iteration problem  (\ref{intersection}) is solved. In Theorem 1, we showed that the $\mathbf{W}_{k}[m,n], \forall k,m,n,$ obtained from (\ref{intersection}) have rank equal to or smaller than one. This implies that the solution of (\ref{opequivalnt}) has also  rank equal to or smaller than one, i.e., condition ii) holds. Moreover, in Section IV-B, we showed that (\ref{opequivalnt}) is a monotonic optimization problem. This implies that the optimal solution lies on the boundary of the feasible set of (\ref{opequivalnt}). As a consequence, constraint $\mbox{C6}$ in (\ref{opequivalnt}) has to hold with equality, i.e., $z_{k}[m, n]=\gamma_{k}[m, n], \forall k,m,n$. Hence, condition i) is also satisfied. This completes the proof.   
%
%
\end{proof}
\color{black}
The computational complexity  of the optimal scheme is exponential in the number of vertices, $D$, used in each iteration. Nevertheless, the obtained global solution constitutes a valuable performance upper bound for any sub-optimal resource allocation algorithm. In the next section, we propose a sub-optimal resource allocation algorithm which has polynomial time computational complexity and yields close-to-optimal performance. 
\section{Low-complexity Resource Allocation Algorithm} 
In this section, we propose a low-complexity resource allocation algorithm based on penalized successive convex approximation providing a locally optimal solution of optimization problem (\ref{optimization1}).
\subsection{Difference of Convex Programming}
In this subsection, we solve optimization problem (\ref{opequivalnt}), as (\ref{opequivalnt}) is equivalent to (\ref{optimization1}). We solve optimization problem (\ref{opequivalnt}) in two steps. First, we transform the problem into the canonical form needed for application of difference of convex programming. Second, we apply a Taylor series expansion to obtain a convex approximation of the non-convex terms. As a result, we obtain a convex optimization problem that can be efficiently solved using convex optimization software. In the following, we explain these two steps in detail.

\textbf{Step 1:} We note that non-convex constraint $\mbox{C6}$ in (\ref{opequivalnt}) can be rewritten as follows: 
\begin{IEEEeqnarray}{lll}\label{prodd1}
	\mbox {C6:} \; z_{k}[m,n]{g_{k}[m,n]({\mathbf {W}})}=z_{k}[m,n](I_{k}[m,n]({\mathbf {W}})+\sigma^{2})\leq {f_{k}[m,n]({\mathbf {W}})}, \; \forall k,m,n, \; 
\end{IEEEeqnarray} 
where $g_{k}[m,n]=I_{k}[m,n]({\mathbf {W}})+\sigma^{2}$. 
We note that $z_{k}[m,n]I_{k}[m,n]({\mathbf {W}})$ in (\ref{prodd1}) is a bilinear term which is non-convex. In fact, the Hessian matrix of a bilinear function is neither positive nor negative semi-definite. Thus, bilinear functions are neither convex nor concave in general, which is an obstacle for designing computationally
efficient resource allocation algorithms. The product of two convex function $f_{1}(x)$ and $f_{2}(x)$ can be written as a difference of two convex functions as follows\cite{Tuybookgo}:
\begin{IEEEeqnarray}{lll}\label{dcr1}
	f_{1}(x)f_{2}(x)=0.5(f_{1}(x)+f_{2}(x))^{2}-0.5f_{1}(x)^{2}-0.5f_{2}(x)^{2}.
\end{IEEEeqnarray} 

Exploiting (\ref{dcr1}) with $z_{k}[m,n]$ and $I_{k}[m,n]({\mathbf {W}})$ as $f_{1}$ and $f_{2}$, respectively, we can express the product term $z_{k}[m,n]I_{k}[m,n]({\mathbf {W}})$ in (\ref{prodd1}) as follows:
\begin{IEEEeqnarray}{lll}\label{dcr2}
	\hspace{-0.5cm}	z_{k}[m,n]I_{k}[m,n]({\mathbf {W}})=Q(z_{k}[m,n],\mathbf{W})-T(z_{k}[m,n],\mathbf{W}),
\end{IEEEeqnarray} 
where
\begin{IEEEeqnarray}{lll}
	\hspace{-0.5cm}	Q(z_{k}[m,n],\mathbf{W})=\frac{1}{2}(z_{k}[m,n]+I_{k}[m,n]({\mathbf {W}}))^{2},\forall k,m,n,\\ \hspace{-0.5cm} T(z_{k}[m,n],\mathbf{W})=\frac{1}{2}(z_{k}[m,n])^{2}+\frac{1}{2}(I_{k}[m,n]({\mathbf {W}}))^{2}, \forall k,m,n.
\end{IEEEeqnarray}
 
Furthermore, substituting (\ref{dcr2}) into (\ref{prodd1}), we obtain an equivalent representation for constraint $\mbox{C6}$ in (\ref{prodd1}) as follows:
\begin{IEEEeqnarray}{lll}\label{c6a}
	\mbox {C6:} \; Q(z_{k}[m,n],\mathbf{W})-T(z_{k}[m,n],\mathbf{W}) \leq  {f_{k}[m,n]({\mathbf {W}})}-\sigma^{2}z_{k}[m,n],\forall k,m,n, \; 
\end{IEEEeqnarray} 
 where the left hand side is a difference of two convex functions. Hence, optimization problem (\ref{opequivalnt}) can now be rewritten as follows:  
\begin{IEEEeqnarray}{lll}\label{eq2}&  \underset { {{\mathbf {W}}}, \mathbf {z}}{ \mathop {\mathrm {minimize}}\nolimits }~ -[F({\mathbf{z}})-V({\mathbf{z}})] \\& \ \mbox {s.t.}~\mbox {C1-C4, C7,} \nonumber
	\\& \nonumber \qquad \mbox {C6:}\; Q({z}_{k}[m,n],\mathbf{W})-T({z}_{k}[m,n],\mathbf{W}) \leq  {f_{k}[m,n]({\mathbf {W}})}-\sigma^{2}z_{k}[m,n],\forall k,m,n.
\end{IEEEeqnarray} 

The optimization problem in (\ref{eq2}) belongs to the class of difference of convex programming problems, since its objective function can be written as a difference of two convex functions and constraints $\mbox {C1}$ and $\mbox {C6}$ can also be expressed as the differences of two convex functions. In particular, functions $-F({\mathbf{z}})$, $-V({\mathbf{z}})$, $Q({z}_{k}[m,n],\mathbf{W})$, and $T({z}_{k}[m,n],\mathbf{W})$ are convex functions.

\textbf{Step 2:}  To obtain a convex optimization problem that can be efficiently solved, we have to handle the non-convex objective function and non-convex constraints $\mbox {C1}$ and $\mbox {C6}$. To this end, we determine the first order approximations of functions ${V}_{k}({\mathbf{z}}_{k})$ and  $T(z_{k}[m,n],\mathbf{W})$ using Taylor series as follows:
\begin{eqnarray}
\label{inequalit1b}
{V}_{k}(\mathbf{z}_{k}) &\leq \bar{V}_{k}(\mathbf{z}_{k}) = & {V}(\mathbf{z}_{k}^{({j})})+ \nabla_{\mathbf{z}_{k}}{V}_{k}(\mathbf{z}_{k}^{({j})})^{T}(\mathbf{z}_{k}-\mathbf{z}_{k}^{({j})}),
\end{eqnarray}
and
\begin{multline} \label{inequalit1c}
T({z}_{k}[m,n],\mathbf{W}) \geq \bar{T}({z}_{k}[m,n],\mathbf{W})=  T({z}^{(j)}_{k}[m,n],\mathbf{W}^{(j)})+\\ \nabla_{{z}_{k}[m,n]}{T}(z^{({j})}_{k}[m,n],\mathbf{W}^{(j)})({z}_{k}[m,n]-{z}^{({j})}_{k}[m,n])\\+\Tr(\nabla_{\mathbf{W}}{T}({z}^{(j)}_{k}[m,n],\mathbf{W}^{(j)})^{T})(\mathbf{W}-\mathbf{W}^{({j})}), \forall k,m,n,
\end{multline}
where $\mathbf{W}^{({j})}$, $\mathbf{z}_{k}^{({j})}$, and ${z}^{({j})}_{k}[m,n]$ are initial feasible points, and
\begin{eqnarray} \nabla_{\mathbf{z}_{k}}\mathbf{V}_{k}(\mathbf{z}_{k})
= \frac{a^{2} 
	Q^{-1}(\epsilon_{k})}{\sqrt{\sum_{m=1}^{M}\sum_{n=1}^{N} {V}_{k}[m,n]}}\begin{pmatrix} 
\ \frac{1}{(1+z_{k}[1,1])^{3}} \\
\ \frac{1}{(1+z_{k}[2,1])^{3}} \\
\vdots \\
\frac{1}{(1+z_{k}[M,N])^{3}} \end{pmatrix},
\end{eqnarray}
\begin{eqnarray}
\nabla_{{z}_{k}[m,n]}{T}({z}_{k}[m,n],\mathbf{W})={z}_{k}[m,n],
\end{eqnarray}
and
\begin{eqnarray}
\nabla_{\mathbf{W}}{T}({z}_{k}[m,n],\mathbf{W})=I_{k}[m,n](\mathbf{W})\mathbf{H}_{k}[m].
\end{eqnarray}
The right hand sides of (\ref{inequalit1b}) and (\ref{inequalit1c}) are affine functions, and by substituting them in (\ref{eq2}), we obtain the following convex optimization problem:  
\begin{IEEEeqnarray}{lll}\label{eq3}&  \underset { {{\mathbf {W}}}, \mathbf {z}}{ \mathop {\mathrm {minimize}}\nolimits }~ -[F({\mathbf{z}})-\bar{V}({\mathbf{z}})] \\& \ \mbox {s.t.}~\mbox {C1:}\;\nonumber F_{k}({\mathbf{z}}_{k})-\bar{V}_{k}({\mathbf{z}}_{k}) \geq B_{k}, \  \forall k, \\& \qquad \nonumber \mbox {C2-\;C4, C7},
	\\& \nonumber \qquad \mbox {C6:}\;Q(z_{k}[m,n],\mathbf{W})-\bar{T}(z_{k}[m,n],\mathbf{W}) \leq  {f_{k}[m,n]({\mathbf {W}})}-z_{k}[m,n],\forall k,m,n.
\end{IEEEeqnarray} 

Optimization problem (\ref{eq3}) can be efficiently solved by standard convex solvers such as CVX \cite{cvx}. Problem (\ref{eq3}) can be solved iteratively where the solution of (\ref{eq3}) in iteration ${j}$ is used as the initial point for the next iteration ${j}+1$. The algorithm produces a sequence of improved feasible solutions until convergence to a local optimum point of problem (\ref{eq3}) or equivalently problem (\ref{optimization1}) in polynomial time \cite{yan,Joinoptimization}.  Moreover, one can show that the solution to (\ref{eq3}) yields a matrix that has a rank equal to or smaller than one, i.e., $\Rank(\mathbf{W}_{k}[m,n]) \leq 1, \forall k,m,n$. The corresponding proof is similar to the one presented in Appendix~B.
\subsection{Penalized Successive Convex Approximation}
 In order to solve (\ref{eq3}) using successive convex approximation, we require a feasible initial point that satisfies QoS constraint $\mbox{C1}$. Since it is not easy to find such initial feasible points, we propose a corresponding algorithm which is based on penalizing optimization problem (\ref{eq3}) when the QoS is violated. The basic idea is to relax the considered problem by adding slack variables to  constraint $\mbox{C1}$ and penalizing the sum of the violations of the constraints. Thereby, using this technique, optimization problem (\ref{eq3}) can be rewritten in equivalent form as follows:
\begin{IEEEeqnarray}{lll}\label{eq4}&  \underset { {{\mathbf {W}}}, \mathbf {z}, \boldsymbol{\tau}}{ \mathop {\mathrm {minimize}}\nolimits }~ -[F({\mathbf{z}})-\bar{V}({\mathbf{z}})]+\beta^{(j)}\sum_{k=1}^{K}\tau_{k} \\& \ \mbox {s.t.}~\mbox {C1:}\ \nonumber F_{k}({\mathbf{z}}_{k})-\bar{V}_{k}({\mathbf{z}}_{k})+\tau_{k} \geq B_{k}, \  \forall k, \\& \qquad \nonumber \mbox {C2-C4, C6, C7},
\end{IEEEeqnarray} 
where $\beta^{(j)}$ is the penalizing weight in iteration $j$, $\tau_{k}, \forall k,$ are slack variables, and $\boldsymbol{\tau}$ is the collection of slack variables $\tau_{k}, \forall k$. \textbf{Algorithm} \ref{sco2} presents an iterative algorithm for solving (\ref{eq4}). In the first iteration, by choosing a small penalty weight $\beta^{(1)}>0$, we allow the QoS constraint to be violated such that the feasible set is large. Then, in each subsequent iteration ${j}$, we use the solution from the previous iteration as initial point, increase the penalty factor $\beta^{(j)}$,  and solve the problem again. Thus, if a feasible point exists, continuing this iterative procedure eventually yields solutions where $\tau_{k}=0, \forall k,$ holds, i.e., (\ref{eq4}) becomes equivalent to (\ref{eq3}). Otherwise, if $\tau_{k}$ does not converge to zero, the original problem is not feasible.  Moreover, a maximum value for the penalty weight $\beta_{\text{max}}$ is imposed to avoid numerical issues.

\begin{algorithm}[t]
	\caption{Penalized Successive Convex Approximation}
		\begin{algorithmic}[1]
	\STATE {Initialize:} The maximum number of iterations $J_{\text{max}}$, iteration index $j=1$, initial points  $\mathbf{{W}}^{(1)}$, $\mathbf{{z}}^{({1})}$, initial penalty factor $\beta^{(1)}\gg {1}$, $\beta_{\text{max}}$, $\eta >1$.\\
	\STATE \textbf{Repeat}\\
	\STATE Solve convex problem (\ref{eq4}) for a given $\mathbf{{W}}^{({j})}$ and $\mathbf{{z}}^{({j})}$ and store the intermediate resource allocation policy \{$\mathbf{{W}}$, $\mathbf{{z}}$\}\\
	\STATE Set ${j}={j}+1$ and update $\mathbf{{W}}^{({j})}=\mathbf{{W}}$, $\mathbf{{z}}^{({j})}=\mathbf{{z}}$, and $\beta^{(j)}=\min(\eta\beta^{(j-1)},\beta_{\text{max}})$.  \\
	\STATE \textbf{Until} convergence or ${j}=J_{\text{max}}$\\
	\STATE $\mathbf{{W}}^{*}=\mathbf{{W}}^{(j)}$,  
		\end{algorithmic} 
		\label{sco2}
\end{algorithm}
\color{black}
\section{Performance Evaluation}
In this section, we provide simulation results to evaluate the effectiveness of the proposed resource allocation design for MISO OFDMA-URLLC systems. We adopt the simulation parameters given in Table I, unless specified otherwise. In our simulations,  a single cell is considered with inner and outer radius $r_{1}=50~\textrm{m}$ and $r_{2}=250~\textrm{m}$, respectively. The BS is located at the centre of the cell. The path loss is calculated as  $35.3 + 37.6 \log_{10}(d_{k})$\cite{chsecross}, where $d_{k}$ is the distance from the BS to user $k$. The small scale fading gains between the BS and the users are modelled as independent and identically Rayleigh distributed. For simplicity, all user weights are set to $\mu_{k}=1, \forall k$. All simulation results are averaged over $1000$ realizations of the path loss and multipath fading, unless
specified otherwise.
\begin{table}[t]
	 \label{tab:table}
	\centering
			\caption{System parameters used in simulations.} 
	\renewcommand{\arraystretch}{1.4}
	\scalebox{0.6}{%
		\begin{tabular}{|c||c|}
			\hline
			Parameter & Value \\ \hline \hline 
			Number and bandwidth of sub-carriers & 64 and 15 kHz \\ \hline
			Noise power density  & -174 dBm/Hz \\ \hline
			Number of bits per packet  & 160 bits \\ \hline  
			Maximum BS transmit power $P_{\text{max}}$  &  $45$~dBm \\ \hline  
			Error tolerances $\rho$ and $\delta$ for \textbf{Algorithms} \ref{polyblocka} and $\ref{intersectiona}$ &  0.01  \\ \hline    
						Penalty factors $\beta^{(1)}$, $\beta_{\text{max}}$  for \textbf{Algorithm} \ref{sco2} &  1000, 5000, 1.1  \\ \hline
							Value of $\eta$  for \textbf{Algorithm} \ref{sco2} &  1.5  \\ \hline
						Packet error probability for user $k$&  $\epsilon_{k}=10^{-6}$						\\ \hline    
	\end{tabular}}
\end{table} 
\subsection{Performance Metric}
For performance evaluation of the system, we define the sum throughput of the system for a given channel realization as follows:
\begin{IEEEeqnarray}{lll} \label{average}
	\bar{R} = \begin{cases}\frac{1}{MN}\sum_{k=1}^{K} \Psi_{k}(\mathbf{W}_{k}),  \quad  &\text{if} \,\,   \mathbf{W} \,\, \text{is feasible} \\ 0  \qquad & \text{otherwise.} \end{cases}
\end{IEEEeqnarray}
If the optimization problem is infeasible for a given channel realization, we set the corresponding sum throughput to zero. The average system sum throughput is obtained by averaging $\bar{R}$ over all considered channel realizations.   
\subsection{Performance Bound and Benchmark Schemes}
\par  We compare the performance of the proposed resource allocation algorithm design with the following benchmark and baseline schemes\footnote{We do not show simulation results for single-input single-output (SISO) OFDMA-URLLC systems to avoid overloading the figures. We refer interested readers to \cite{ghanem1} for such results.}:
\begin{itemize}
	\setlength{\itemsep}{1pt}
	\item {\textbf{Upper bound}}: To obtain an (unachievable) performance upper bound, Shannon's capacity formula is adopted in problem (\ref{optimization1}), i.e., $V(\mathbf{W})$ and $V_{k}(\mathbf{W}_{k})$ are set to zero in the objective function and constraint $\mbox {C1}$, respectively, and all other constraints are retained. The resulting optimization problem is solved optimally and sub-optimally using modified versions of \textbf{Algorithms} \ref{polyblocka} and \ref{sco2}, respectively. The corresponding sum throughput is obtained from (\ref{average}) where $\Psi_{k}(\mathbf{W}_{k})=F_{k}(\mathbf{W}_{k})$ is used. 
		\item {\textbf{Baseline scheme 1:}} For this scheme, as for the performance upper bound, the solution
	obtained for Shannon's capacity formula is used to compute the sum throughput. However, now $\Psi_{k}(\mathbf{W}_{k})=F_{k}(\mathbf{W}_{k})$ is used in (\ref{average}), and $\Psi_{k}(\mathbf{W}_{k})=F_{k}(\mathbf{W}_{k})-V_{k}(\mathbf{W}_{k}) \geq B_{k}$ is used to check the feasibility of the solution. 
			\item {\textbf{Baseline scheme 2:}}
For this scheme, we employ  maximum ratio transmission (MRT) beamforming, where $\mathbf{w}_{k}[m,n]=\sqrt{p_{k}[m,n]}\frac{\mathbf{h}_{k}[m]}{\|\mathbf{h}_{k}[m]\|}$, and optimize the powers $p_{k}[m,n]$. The resulting optimization problem is solved using successive convex approximation employing a similar approach as for deriving \textbf{Algorithm} \ref{sco2}.
\end{itemize}
  \subsection{Simulation Results}
  In this subsection, we illustrate the effectiveness of the proposed resource allocation algorithms for MISO OFDMA-URLLC systems via computer simulations.

Figs.~\ref{convergance} and \ref{convergance2} show the convergence of the proposed optimal (\textbf{Algorithm} \ref{polyblocka}) and sub-optimal (\textbf{Algorithm} \ref{sco2}) algorithms for different numbers of sub-carriers $M$ and different numbers of users $K$. We show the system sum throughput as a function of the number of iterations for a given channel realization. As can be observed from Fig.~\ref{convergance}, the proposed optimal scheme and the proposed low-complexity scheme converge to the global optimum solution after a finite number of iterations. However, the low-complexity scheme reaches the optimal point much faster than the optimal scheme. In particular, \textbf{Algorithm} \ref{polyblocka} converges to the optimal solution after approximately 2000 and 3000 iterations for $M=12$ and $M=16$, respectively, while \textbf{Algorithm} \ref{sco2} converges in less than $5$ iterations for both $M=12$ and $M=16$. For the proposed optimal scheme, the number of iterations required for convergence increases significantly as the number of sub-carriers increase since increasing the number of sub-carriers increases the dimensionality of the search space. The convergence speed of the proposed low-complexity scheme is less sensitive to the problem size and the number of users compared to that of the optimal scheme. Furthermore, Fig.~\ref{convergance} shows that the proposed low-complexity sub-optimal scheme achieves the same performance as the optimal scheme.
 \begin{figure*}[!tbp]
	\centering
		\begin{minipage}{0.49\textwidth}
				\hspace{-0.3cm}	
	\resizebox{1\linewidth}{!}{\psfragfig{f1a}}\vspace{-4mm}
	\caption{Convergence of the proposed optimal (\textbf{Algorithm} \ref{polyblocka}) and low-complexity sub-optimal (\textbf{Algorithm} \ref{sco2}) algorithms. $P_{\text{max}}=45$~dBm, $K=2$, $N=2$, $N_{T}=4$, $D_{1}=1$, $D_{2}=2$, and $d_{1}=d_{2}=50$\textrm{m}.}
	\label{convergance}
		\end{minipage}
	        \hfill
			\begin{minipage}{0.49\textwidth}
		\centering
		\hspace{-0.3cm}
		\resizebox{1\linewidth}{!}{\psfragfig{f1}}\vspace{-4mm}
		\caption{Convergence of the proposed low-complexity scheme. $P_{\text{max}}=45$~dBm, $M=64$, $N=4$, $D_{1}=D_{2}=2$, $D_{k}=4,~\forall k \neq \{1,2\}$. The users are randomly distributed within the inner and the outer radius.}
		\label{convergance2}
	\end{minipage}
\end{figure*}

In Fig.~\ref{convergance}, we chose relatively small values for $M$, $N$, $N_{T}$, and $K$ since the complexity of optimal \textbf{Algorithm} \ref{polyblocka} increases rapidly with the dimensionality of the problem. In Fig.~\ref{convergance2}, we investigate the convergence behaviour of the proposed sub-optimal scheme for larger values of these parameters. As can be observed from Fig.~\ref{convergance2}, for all considered combinations of parameter values, the proposed low-complexity suboptimal scheme requires at most 5 iterations to converge.

In Figs.~\ref{chpower1} and \ref{chpower2}, we show the average sum throughput versus the maximum transmit power at the BS. As expected, the average system sum throughput improves with increasing maximum transmit power $P_{\text{max}}$ because the SINR of the users can be enhanced by allocating more power. In particular, in Fig.~\ref{chpower1}, the proposed low-complexity scheme attains virtually the same performance as the proposed optimal scheme for all considered transmit powers. In Fig.~\ref{chpower1}, we also show results for the two baseline schemes. Baseline schemes 1 and 2 achieve lower throughputs compared to the proposed schemes. For baseline scheme 2, this performance loss is caused by the sub-optimality of the fixed beamformer. This causes the average system sum throughput to quickly saturate for transmit powers exceeding $25$~dBm. For baseline scheme 1, the resource allocation policies $\mathbf{W}$ obtained based on Shannon's capacity formula often violate constraint C1 in (\ref{optimization1}), especially for small $P_{\text{max}}$, leading to a non-feasible solution. Therefore, Shannon's capacity formula should not be used for the design of MISO OFDMA-URLLC systems, since the QoS requirements of the users cannot be guaranteed. For high $P_{\text{max}}$, for the proposed schemes, all non-zero $\Tr(\mathbf{W}_{k}[m,n])$ assume large values. Hence, the corresponding $\gamma_{k}[m,n]$ in (\ref{BT}) are large and $V_{k}(\mathbf{w}_{k})$ becomes negligible compared to $F_{k}(\mathbf{w}_{k})$. Therefore, in this case, baseline scheme 1, which  assumes $V_{k}(\mathbf{w}_{k})$ is zero, yields a similar performance as the proposed schemes.
 \begin{figure*}[!tbp]
	\centering
	\hspace{-0.2cm}	
		\begin{minipage}{0.49\textwidth}
	\resizebox{1\linewidth}{!}{\psfragfig{f5}}\vspace{-5mm}
	\caption{Average system sum throughput versus BS transmit power, $P_{\text{max}}$, for different resource allocation schemes. $M=16$, $N_{T}=2$, $N=2$, $K=2$, $d_{k}=50~\textrm{m}, \forall k \in \{1,2\}$, and $D_{1}=1,D_{2}=2$.}
	\label{chpower1}
		\end{minipage}
		        \hfill
		\begin{minipage}{0.49\textwidth}
					\centering
			\hspace{-0.5cm}
	\resizebox{1\linewidth}{!}{\psfragfig{p1}}\vspace{-5mm}
	\caption{Average system sum throughput versus BS transmit power, $P_{\text{max}}$, for different resource allocation schemes. $M=64$, $N=4$, and $K=6$. The users are randomly distributed within the inner and the outer radius.}
	\label{chpower2}
		 \end{minipage}
\end{figure*}

In Fig.~\ref{chpower2}, we show the average system sum throughput for different numbers of antennas and different delay requirements. For delay scenario $S_{0}$, none of the users has delay restrictions, i.e., $D_{k}=N=4, \forall k$. In contrast, for  delay scenario $S_{1}$, two users have strict delay constraints while the remaining users do not, i.e., $D_{1}=D_{2}=2, D_{k}=N=4, \forall k\neq \{1,2\}$. As can be observed from Fig.~\ref{chpower2}, adding more antennas at the BS considerably increases the average system sum throughput, as additional antennas offer additional degrees of freedom for resource allocation and enable more efficient and more precise beamforming. In particular, for delay scenario $S_{1}$ and transmit power $P_{\text{max}}=40$~dBm, increasing the number of antennas at the BS from $N_{T}=4$ to $N_{T}=8$ improves the average system sum throughput by 96.77~\%. Fig.~\ref{chpower2} also reveals the impact of the delay requirements on the average system sum throughput. As can be observed, the stricter delay requirements for $S_{1}$ reduce the average system sum throughput compared to $S_{0}$ because the BS is forced to allocate more power to the two delay sensitive users even if their channel conditions are poor to ensure their delay requirements are met. For example, for $P_{\text{max}}=40$~dBm and $N_{T}=8$, the strict delay requirements of $S_{1}$ decreases the upper bound and the average system throughput of the proposed scheme compared to $S_{0}$ by 6.2 and 10 bits/s/Hz, respectively.   

In Fig.~\ref{ant}, we investigate the average system sum throughput versus the number of antennas at the BS, $N_{T}$, for delay scenarios $S_{0}$ and $S_{1}$ considered in Fig.~\ref{chpower2}, and different numbers of URLLC users. As can be observed from Fig.~\ref{ant}, the average system sum throughput significantly improves as the number of antennas at the BS increases. This is due to the fact that more antennas offer additional degrees of freedom for resource allocation which leads to higher received SINRs at the users. Furthermore, the proposed scheme approaches the performance upper bound as the number of BS antennas increases since the value of $V_{k}(\mathbf{w}_{k})$ in (\ref{BT}) becomes small compared to that of $F_{k}(\mathbf{w}_{k})$ for large SINRs. Hence, the impact of finite blocklength coding on the average system sum throughput can be compensated by using large numbers of antennas at the BS. Moreover, as expected, changing the delay requirements from $S_{0}$ to $S_{1}$ reduces the throughput for all considered schemes. To compensate for this effect, the BS can increase the number of antennas in order to be able to serve the users with stricter delay requirement in a more efficient manner. Fig.~\ref{ant} also elucidates the impact of the numbers of users on the average system sum throughput. As can be seen, since the proposed scheme can exploit multi-user diversity, increasing the number of users from $K=4$ to $K=6$ increases the throughput. In contrast, baseline scheme 2 cannot support $K=6$ users for delay scenario $S_{1}$ because this scheme does not exploit all available degrees of freedom for resource allocation, and hence, the two users with strict delay requirements may lead to infeasible solutions, which has a negative impact on the average system sum throughput.       
 \begin{figure*}[!tbp]
	\centering
	\hspace{-0.4cm}	
	  	\begin{minipage}{0.49\textwidth}
	\centering
		\resizebox{1\linewidth}{!}{\psfragfig{a2}}\vspace{-4.5mm}
	\caption{Average system sum throughput [bits/s/Hz] vs. number of antennas at the BS. $P_{\text{max}}=45$~dBm, $M=64$, and $N=4$. The users are randomly distributed within the inner and the outer radius.}
	\label{ant}
	\end{minipage}
		        \hfill
			\begin{minipage}{0.49\textwidth}
							\centering
		\hspace{-0.5cm}
		\resizebox{1\linewidth}{!}{\psfragfig{d2}}\vspace{-4.5mm}
		\caption{Average system sum throughput [bits/s/Hz] vs. delay in time slots. $K=6$, $P_{\text{max}}=$~45~dBm, $M=64$,  $N=5$, and $N_{T}=6$. The users are randomly distributed within the inner and the outer radius.}
		\label{delay2}
	\end{minipage}	
 \end{figure*}

In Fig.~\ref{delay2}, we investigate the effect of different delay requirements on the average system sum throughput. We consider the following delay scenarios: $\tilde{S_{0}}=\{D_{k}=N=5, \forall k\}$ (i.e., no delay sensitive users), $\tilde{{S}_{1}}=\{D_{1}=D_{2}=D,D_{k}=N=5, \forall k\neq \{1,2\}\}$, $\tilde{{S}_{2}}=\{D_{k}=D, \forall k \in\{1,2,3,4\},D_{5}=D_{6}=N=5\}$, $\tilde{{S}_{3}}=\{D_{k}=D, \forall k \in \{1,2,3,4,5\}, D_{6}=N=5\}$. In Fig.~\ref{delay2}, we show the average system sum throughout versus delay parameter $D$. As can be observed, the average system sum throughout increases with $D$, which is due to the
fact that a large $D$ increases the feasible set of problem (\ref{optimization1}). Furthermore, the average system sum throughput decreases as the number of delay sensitive users requiring a delay of $D < N=5$ increases, since having to serve more delay sensitive users reduces the flexibility in resource allocation. Moreover, the performance of baseline scheme 2 decreases significantly if delay sensitive users are present. In particular, for baseline scheme 2, changing the delay requirements from $\tilde{S_{0}}$ to $\tilde{S_{1}}$ significantly decreases the average system sum throughput, as fixed MRT beamforming is not able to adequately support delay sensitive users.

In Fig.~\ref{users}, we show the average system sum throughput
versus the number of users for delay scenarios $S_{0}$ and $S_{1}$ considered in Fig.~\ref{chpower2}. The average system sum throughput for the proposed low-complexity scheme is close to the upper bound for small numbers of users for both considered delay scenarios. This is due to the fact that if there are only few users, they can be assigned a sufficiently large number of resource blocks to make the impact of finite blocklength coding negligible. As the number of users increases, the average system sum throughput increases due to multi-user diversity. However, at the same time, the impact of finite blocklength coding becomes more pronounced, and hence, the gap between the proposed scheme and the upper bound widens. Thus, there exists a trade-off between the performance degradation caused by short blocklengths and the performance gain induced by multi-user diversity. On the other hand, baseline scheme 2 cannot support more than $K=6$ users for delay scenario ${{{S}_{1}}}$ because this scheme does not exploit all available degrees of freedom for resource allocation.
\begin{figure}
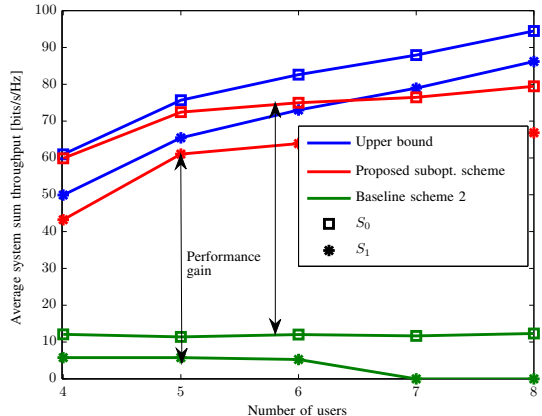

	\centering
	\resizebox{0.49\linewidth}{!}{\psfragfig{u1}}
	\caption{Average system sum throughput versus number of users. $P_{\text{max}}=$~45~dBm, $M=64$,  $N=4$, and $N_{T}=12$. The users are randomly distributed within the inner and the outer radius.}
	\label{users}
\end{figure}

\section{Conclusion}
In this paper, we investigated the optimal resource allocation algorithm design for broadband MISO OFDMA-URLLC systems. The resource allocation algorithm design was formulated as a non-convex optimization problem for maximization of the weighted system sum throughput subject to QoS constraints for the URLLC users. The global optimum solution was obtained exploiting monotonic optimization theory. Moreover, to strike a balance between complexity and performance, we proposed a low-complexity sub-optimal algorithm to solve the optimization problem using successive convex approximation. Our simulation results revealed that the proposed sub-optimal algorithm achieves a close-to-optimal performance with low computational complexity. Furthermore, deploying multiple antennas at the BS was shown to be an effective approach to improve the reliability and to reduce the latency of URLLC systems. Moreover, our results revealed that stringent delay requirements have a negative impact on the throughput of MISO OFDMA-URLLC systems. Our results also showed that resource allocation based on Shannon's capacity formula, as is typically done in MISO OFDMA systems, may lead to infeasible solutions if URLLC is desired. Finally, the proposed optimal and sub-optimal algorithms were shown to significantly outperform two heuristic baseline schemes emphasizing the importance of optimal resource allocation in MISO OFDMA-URLLC systems.
\appendices
\section{}
  In the following, we show that the objective function of (\ref{opequivalnt}) is a difference of two monotonic and concave functions. To this end, we rewrite the objective function as follows: \vspace*{-2mm}
  \begin{IEEEeqnarray}{lll}{U}({\mathbf{z}})={F}({\mathbf{z}})-{V}({\mathbf{z}}).
   \end{IEEEeqnarray}
  ${U}({\mathbf{z}})$ is the difference of two concave functions if both ${F}({\mathbf{z}})$ and ${V}({\mathbf{z}})$ are monotonic and concave. Function ${F}(\mathbf{{z}})$ is a sum of logarithmic functions, and hence, it is a monotonic and concave function\cite{Boyed}. Furthermore, to prove that ${V}(\mathbf{{z}})$ is monotonic and concave, we rewrite it as follows: \vspace*{-2mm}
  \begin{IEEEeqnarray}{lll}
  	{V}(\mathbf{{z}})=\sum_{k=1}^{K}\mu_{k}Q^{-1}(\epsilon_{k})\sqrt{\sum_{m=1}^{M}\sum_{n=1}^{N}{V}_{k}[m,n]} ,
  \end{IEEEeqnarray}  
  where \vspace*{-2mm}
  \begin{IEEEeqnarray}{lll}
  	{V}_{k}[m,n]= a^{2}\left(1-\left({1+z_{k}[m,n]}\right)^{-2}\right).
  \end{IEEEeqnarray} 
  Note that ${V}(\mathbf{{z}})$ is always positive, because for $\epsilon_{k} \in (0,0.5)$, $Q^{-1}(\epsilon_{k}) >0$ holds. 
  To prove the monotonicity and the concavity of ${V}(\mathbf{{z}})$, first we will show that  ${V}_{k}[m,n]$ is concave by taking the first and second derivatives with respect to $z_{k}[m,n]$  as follows:  
  \begin{eqnarray}
  \frac{\mathrm{d}{V}_{k}[m,n]}{\mathrm{d}z_{k}[m,n]}&=&\frac{ 2a^{2}}{(1+z_{k}[m,n])^3}, \\
  \frac{\mathrm{d}^2{V}_{k}[m,n]}{{\mathrm{d}(z_{k}[m,n]})^2}&=&\frac{-6 a^{2}}{(1+z_{k}[m,n])^4}.
  \end{eqnarray} 
  Function ${V}_{k}[m,n]$ is a monotonic increasing and concave function because the first derivative is positive and the second derivative is negative for any $z_{k}[m,n]>0$, respectively. Moreover, since a sum of monotonic functions is monotonic, and the sum of concave functions is also concave, $\sum_{m=1}^{M}\sum_{n=1}^{N}{V}_{k}[m,n]$ is a monotonic and  concave function. By using the composition rules of convex analysis, the square root is concave and the extended-value extension on the real line is non-decreasing \cite{Boyed}. Thus, the square root of a monotonic and concave function is monotonic and concave. Finally, a weighted sum of monotonic and concave functions is also a monotonic and concave. This concludes the proof. 
\section{}
The SDP problem in (\ref{intersection}) is jointly convex in the optimization variables and satisfies Slater's constraint qualifications. Therefore, strong duality holds and solving the dual problem is equivalent to solve the primal problem\cite{Boyed}. To formulate the dual problem, we write the Lagrangian of problem (\ref{intersection}) as follows:
\begin{IEEEeqnarray}{lll}\label{optimization32}
	\mathcal{L}=-\sum_{k=1}^{K}\sum_{m=1}^{M}\sum_{n=1}^{N}\theta_{k}[m,n][f_{k}[m,n]({\mathbf {W}})-{\lambda} z^{(j)}_{k}[m,n] g_{k}[m,n]({\mathbf {W}})]\nonumber\\+\alpha \sum_{k=1}^{K}\sum_{m=1}^{M}\sum_{n=1}^{N}\Tr({\mathbf {W}}_{k}[m,n])\nonumber \\+\sum_{k=1}^{K}\sum_{n=1}^{N}\eta_{k}[n]\Tr({\mathbf {W}}_{k}[m,n])
	-\sum_{k=1}^{K}\sum_{m=1}^{M}\sum_{n=1}^{N}\Tr({\mathbf {W}}_{k}[m,n]\mathbf{Y}_{k}[m,n])+\Lambda, 
\end{IEEEeqnarray} 
where $\Lambda $ represents the collection of all terms that are independent of ${\mathbf {W}}$. Variables $\theta_{k}[m,n]$, $\alpha$, and $\eta_{k}[n]$ are the Lagrange multipliers associated with constraints $\mbox{C6}$, $\mbox{C2}$, and $\mbox{C3}$, respectively. Matrices $\mathbf{Y}_{k}[m,n] \in \mathbb{C}^{N_{T}\times N_{T}}$ are the Lagrange multipliers for the positive semi-definite constraint $\mbox{C4}$ for matrices ${\mathbf {W}}_{k}[m,n]$. Therefore, the dual problem for the SDP problem in (\ref{intersection}) is given as follows:
\begin{IEEEeqnarray}{lll}\label{dualproblem}
	\mathop{\text{maximize}}_{{\theta_{k}[m,n],\alpha,\eta_{k}[n]\geq 0,}\atop{\mathbf{Y}_{k}[m,n]\succeq 0}} \mathop{\text{minimize}}_{{\mathbf{W}}_{k}[m,n],\boldsymbol{\zeta}} \mathcal{L} ({\mathbf{W}},\boldsymbol{\zeta},\theta_{k}[m,n],\alpha,\eta_{k}[n],{\mathbf{Y}}_{k}[m,n]).
\end{IEEEeqnarray} 
In the following, we reveal the structure of the optimal ${\mathbf{W}}$ of (\ref{intersection}) by studying the Karush Kuhn Tucker (KKT) optimality conditions. The KKT conditions for the optimal solution ${\mathbf{W}}^{*}$ are given by:
\begin{IEEEeqnarray}{lll}\label{optimaldual1}
	\mathbf{Y}_{k}^{*}[m,n] \succeq 0, \quad \theta_{k}^{*}[m,n], \alpha^{*},\eta^{*}_{k}[n] \geq 0 
\end{IEEEeqnarray} 
\vspace{-1cm}
\begin{IEEEeqnarray}{lll}\label{optimaldual2}
	\mathbf{Y}_{k}^{*}[m,n] {\mathbf {W}}_{k}^{*}[m,n] = \pmb{0}, 
\end{IEEEeqnarray} 
\vspace{-1cm}
\begin{IEEEeqnarray}{lll}\label{gradient1}
	\nabla_{{\mathbf{W}}_{k}^{*}[m,n]}\mathcal{L}=\pmb{0},
\end{IEEEeqnarray} 
where $\mathbf{Y}_{k}^{*}[m,n]$, $\theta_{k}^{*}[m,n]$, $\alpha^{*}$, and $\eta^{*}_{k}[n]$ are the optimal Lagrange multipliers for dual problem (\ref{dualproblem}), and $\nabla_{{\mathbf{W}}_{k}^{*}[m,n]}\mathcal{L}$ denotes the gradient with respect to matrices ${{\mathbf{W}}_{k}^{*}}[m,n]$. The KKT condition in (\ref{gradient1}) can be rewritten as follows:
\begin{IEEEeqnarray}{lll}\label{kktcondition}
	-\theta_{k}^{*}[m,n]\mathbf{H}_{k}[m]+\alpha \mathbf{I}_{N_{T}}+\eta_{k}[n] \mathbf{I}_{N_{T}}-\mathbf{Y}_{k}^{*}[m,n]=\pmb{0}.
\end{IEEEeqnarray} 
By rearranging the terms in (\ref{kktcondition}), we obtain: 
\begin{IEEEeqnarray}{lll}\label{kktcondition2}
	(\alpha+\eta_{k}[n]) \mathbf{I}_{N_{T}}=\theta_{k}^{*}[m,n]\mathbf{H}_{k}[m]+\mathbf{Y}_{k}^{*}[m,n].
\end{IEEEeqnarray} 
Multiplying both sides of (\ref{kktcondition2}) with $\mathbf {W}_{k}^{*}[m,n]$ and exploiting (\ref{optimaldual2}), we get:
\begin{IEEEeqnarray}{lll}\label{kktcondition3}
	(\alpha+\eta_{k}[n]) \mathbf {W}_{k}^{*}[m,n]=\theta_{k}^{*}[m,n]\mathbf{H}_{k}[m,n]\mathbf {W}_{k}^{*}[m,n].
\end{IEEEeqnarray} 
Now, we consider two cases for the value of $\alpha+\eta_{k}[n]$, namely $\alpha+\eta_{k}[n]=0$ and $\alpha+\eta_{k}[n]>0$. For the first case, since both $\alpha$ and $\eta_{k}[n]$ are non-negative $\alpha+\eta_{k}[n]=0$ implies that $\eta_{k}[n]=0$, and as a result, constraint $\mbox{C3}$ holds with equality. This means that $\mathbf {W}_{k}^{*}[m,n]=\mathbf{0}$ and hence $\Rank(\mathbf {W}_{k}^{*}[m,n])$ is zero. For the second case, when $\alpha+\eta_{k}[n]>0$ holds, using basic rank inequalities for matrices, we obtain the following relations:
\begin{IEEEeqnarray}{lll}\label{rankk}
	\hspace{-2cm}
	\Rank((\alpha+\eta_{k}[n]) \mathbf {W}_{k}^{*}[m,n])=\Rank(\mathbf {W}_{k}^{*}[m,n])\nonumber\\= \Rank(\theta_{k}^{*}[m,n]\mathbf{H}_{k}[m]\mathbf {W}_{k}^{*}[m,n]) \leq \Rank(\theta_{k}^{*}[m,n]\mathbf{H}_{k}[m]) \leq 1.
\end{IEEEeqnarray} 

This implies that the beamforming matrix is either rank one or $\mathbf {W}_{k}^{*}[m,n]=\pmb{0}$, i.e., no transmission to user $k$ on subcarrier $m$ on time slot $n$. This completes the proof of Theorem 1.\color{black}
\bibliography{ref}  

\begin{thebibliography}{10}
\providecommand{\url}[1]{#1}
\csname url@samestyle\endcsname
\providecommand{\newblock}{\relax}
\providecommand{\bibinfo}[2]{#2}
\providecommand{\BIBentrySTDinterwordspacing}{\spaceskip=0pt\relax}
\providecommand{\BIBentryALTinterwordstretchfactor}{4}
\providecommand{\BIBentryALTinterwordspacing}{\spaceskip=\fontdimen2\font plus
\BIBentryALTinterwordstretchfactor\fontdimen3\font minus
  \fontdimen4\font\relax}
\providecommand{\BIBforeignlanguage}[2]{{%
\expandafter\ifx\csname l@#1\endcsname\relax
\typeout{** WARNING: IEEEtran.bst: No hyphenation pattern has been}%
\typeout{** loaded for the language `#1'. Using the pattern for}%
\typeout{** the default language instead.}%
\else
\language=\csname l@#1\endcsname
\fi
#2}}
\providecommand{\BIBdecl}{\relax}
\BIBdecl

\bibitem{ghanem1}
W.~Ghanem, V.~Jamali, Y.~Sun, and R.~Schober, ``Resource allocation for
  multi-user downlink {URLLC-OFDMA} systems,'' in \emph{Proc. {IEEE} {Int}.
  {C}ommun. {C}onf.}, Shanghai, P.R. China, May 2019.

\bibitem{Toward}
G.~Durisi, T.~Koch, and P.~Popovski, ``Toward massive, ultrareliable, and
  low-latency wireless communication with short packets,'' \emph{Proc. {IEEE}},
  vol. 104, no.~9, pp. 1711--1726, Sept 2016.

\bibitem{Popovski1}
P.~Popovski, ``Ultra-reliable communication in {5G} wireless systems,'' in
  \emph{Proc. IEEE Int. Conf. 5G Ubiq. Connect}, Nov 2014, pp. 146--151.

\bibitem{Mehdi1}
M.~Bennis, M.~Debbah, and H.~V. Poor, ``Ultrareliable and low-latency wireless
  communication: Tail, risk, and scale,'' \emph{Proc. IEEE}, vol. 106, no.~10,
  pp. 1834--1853, Oct 2018.

\bibitem{6251827}
D.~W.~K. Ng, E.~S. Lo, and R.~Schober, ``Energy-efficient resource allocation
  in {OFDMA} systems with large numbers of base station antennas,''
  \emph{{IEEE} Trans. Wireless. Commun}, vol.~11, no.~9, pp. 3292--3304,
  September 2012.

\bibitem{Seong1}
K.~{Seong}, M.~{Mohseni}, and J.~M. {Cioffi}, ``Optimal resource allocation for
  {OFDMA} downlink systems,'' in \emph{Proc. IEEE Intern. Sympos. on Inf.
  Theory}, July 2006, pp. 1394--1398.

\bibitem{enrgyefficient}
D.~W.~K. Ng, E.~S. Lo, and R.~Schober, ``Energy-efficient resource allocation
  in multi-cell {OFDMA} systems with limited backhaul capacity,'' \emph{{IEEE}
  Trans. Wireless. Commun}, vol.~11, no.~10, pp. 3618--3631, October 2012.

\bibitem{multirelayOFDM}
W.~{Dang}, M.~{Tao}, H.~{Mu}, and J.~{Huang}, ``Subcarrier-pair based resource
  allocation for cooperative multi-relay {OFDM} systems,'' \emph{{IEEE} Trans.
  Wireless Commun}, vol.~9, no.~5, pp. 1640--1649, May 2010.

\bibitem{5456049}
V.~D. {Papoutsis}, I.~G. {Fraimis}, and S.~A. {Kotsopoulos}, ``User selection
  and resource allocation algorithm with fairness in {MISO-OFDMA},'' \emph{IEEE
  Commun. Lett.}, vol.~14, no.~5, pp. 411--413, May 2010.

\bibitem{shannon}
C.~E. Shannon, ``A mathematical theory of communication,'' \emph{Bell Syst.
  Tech. J}, vol.~56, no.~5, pp. 2307--2359, May 2010.

\bibitem{thesis}
Y.~Polyanskiy, ``Channel coding: {N}on-asymptotic fundamental limits,'' Ph.D.
  dissertation, Princeton University.

\bibitem{strassen}
V.~Strassen, ``Asymptotische {A}bschatzungen in {S}hannon's
  {I}nformationstheorie,'' \emph{In Proc. 3rd Trans. Prague Conf. Inf. Theory},
  vol.~56, no.~5, pp. 689--723, May 1962.

\bibitem{Polyanskiy}
Y.~Polyanskiy, H.~V. Poor, and S.~Verdu, ``Channel coding rate in the finite
  blocklength regime,'' \emph{IEEE Trans. Inf. Theory}, vol.~56, no.~5, pp.
  2307--2359, May 2010.

\bibitem{Erseghe1}
T.~Erseghe, ``Coding in the finite-blocklength regime: {B}ounds based on
  {L}aplace integrals and their asymptotic approximations,'' \emph{IEEE Trans.
  Inf. Theory}, vol.~62, no.~12, pp. 6854--6883, Dec 2016.

\bibitem{Quasi}
W.~Yang, G.~Durisi, T.~Koch, and Y.~Polyanskiy, ``Quasi-static multiple-antenna
  fading channels at finite blocklength,'' \emph{IEEE Trans. Inf. Theory},
  vol.~60, no.~7, pp. 4232--4265, July 2014.

\bibitem{optimal}
Y.~Hu, M.~Ozmen, M.~C. Gursoy, and A.~Schmeink, ``Optimal power allocation for
  {QoS}-constrained downlink multi-user networks in the finite blocklength
  regime,'' \emph{{IEEE} Trans. Wireless Commun}, vol.~17, no.~9, pp.
  5827--5840, Sept 2018.

\bibitem{wpspc}
J.~{Chen}, L.~{Zhang}, Y.~{Liang}, X.~{Kang}, and R.~{Zhang}, ``Resource
  allocation for wireless-powered {IoT} networks with short packet
  communication,'' \emph{{IEEE} Trans. Wireless Commun}, vol.~18, no.~2, pp.
  1447--1461, Feb 2019.

\bibitem{convexfinite}
S.~Xu, T.~H. Chang, S.~C. Lin, C.~Shen, and G.~Zhu, ``Energy-efficient packet
  scheduling with finite blocklength codes: convexity analysis and efficient
  algorithms,'' \emph{{IEEE} Trans. Wireless Commun}, vol.~15, no.~8, pp.
  5527--5540, Aug 2016.

\bibitem{csunoptimizing}
C.~Sun, C.~She, C.~Yang, T.~Q.~S. Quek, Y.~Li, and B.~Vucetic, ``Optimizing
  resource allocation in the short blocklength regime for ultra-reliable and
  low-latency communications,'' \emph{{IEEE} Trans. Wireless Commun}, vol.~18,
  no.~1, pp. 402--415, Jan 2019.

\bibitem{chsecross}
C.~She, C.~Yang, and T.~Q.~S. Quek, ``Cross-layer optimization for
  ultra-reliable and low-latency radio access networks,'' \emph{{IEEE} Trans.
  Commun}, vol.~17, no.~1, pp. 127--141, Jan 2018.

\bibitem{miso}
C.~Shen, T.~Chang, H.~Xu, and Y.~Zhao, ``Joint uplink and downlink transmission
  design for {URLLC} using finite blocklength codes,'' in \emph{Proc. {ISWCS}
  2018, Lisbon, Portugal, August 28-31, 2018}, 2018, pp. 1--5.

\bibitem{Ultraa}
A.~{Avranas}, M.~{Kountouris}, and P.~{Ciblat}, ``Energy-latency tradeoff in
  ultra-reliable low-latency communication with retransmissions,'' \emph{IEEE
  J. Sel. Areas Commun}, vol.~36, no.~11, pp. 2475--2485, Nov 2018.

\bibitem{Throughputcoh}
D.~{Qiao}, M.~C. {Gursoy}, and S.~{Velipasalar}, ``Throughput-delay tradeoffs
  with finite blocklength coding over multiple coherence blocks,'' \emph{{IEEE}
  Trans. Commun}, pp. 1--1, 2019.

\bibitem{finitemop}
M.~{Haghifam}, M.~{Robat Mili}, B.~{Makki}, M.~{Nasiri-Kenari}, and
  T.~{Svensson}, ``Joint sum rate and error probability optimization: Finite
  blocklength analysis,'' \emph{{IEEE} Wireless Commun. Lett}, vol.~6, no.~6,
  pp. 726--729, Dec 2017.

\bibitem{misoyan}
Y.~{Sun}, D.~W.~K. {Ng}, J.~{Zhu}, and R.~{Schober}, ``Robust and secure
  resource allocation for full-duplex {MISO} multicarrier {NOMA} systems,''
  \emph{{IEEE} Trans. Commun}, vol.~66, no.~9, pp. 4119--4137, Sep. 2018.

\bibitem{yanmisoc}
Y.~{Sun}, D.~W.~K. {Ng}, and R.~{Schober}, ``Optimal resource allocation for
  multicarrier {MISO-NOMA} systems,'' in \emph{2017 IEEE International
  Conference on Communications (ICC)}, May 2017, pp. 1--7.

\bibitem{Tuy}
\BIBentryALTinterwordspacing
H.~Tuy, F.~Al-Khayyal, and P.~T. Thach, \emph{Monotonic optimization: branch
  and cut methods}.\hskip 1em plus 0.5em minus 0.4em\relax Boston, MA: Springer
  US, 2005, pp. 39--78. [Online]. Available:
  \url{https://doi.org/10.1007/0-387-25570-2_2}
\BIBentrySTDinterwordspacing

\bibitem{Zhangmonotonic}
\BIBentryALTinterwordspacing
Y.~J.~A. Zhang, L.~Qian, and J.~Huang, ``Monotonic optimization in
  communication and networking systems,'' \emph{Found. Trends Netw.}, vol.~7,
  no.~1, pp. 1--75, Oct. 2013. [Online]. Available:
  \url{http://dx.doi.org/10.1561/1300000038}
\BIBentrySTDinterwordspacing

\bibitem{Tuymonotonic}
\BIBentryALTinterwordspacing
H.~Tuy, ``Monotonic optimization: Problems and solution approaches,''
  \emph{SIAM J. on Optimization}, vol.~11, no.~2, pp. 464--494, Feb. 2000.
  [Online]. Available: \url{https://doi.org/10.1137/S1052623499359828}
\BIBentrySTDinterwordspacing

\bibitem{emiloptimal}
\BIBentryALTinterwordspacing
E.~Bj\"{o}rnson and E.~Jorswieck, ``Optimal resource allocation in coordinated
  multi-cell systems,'' \emph{Found. Trends Commun. Inf. Theory}, vol.~9, no.
  2–3, pp. 113--381, 2013. [Online]. Available:
  \url{http://dx.doi.org/10.1561/0100000069}
\BIBentrySTDinterwordspacing

\bibitem{cvx}
M.~Grant and S.~Boyd, ``{CVX}: Matlab software for disciplined convex
  programming, version 2.1,'' \url{http://cvxr.com/cvx}, Mar. 2014.

\bibitem{Tuybookgo}
H.~Tuy, \emph{Convex Analysis and Global Optimization}, 2nd~ed.\hskip 1em plus
  0.5em minus 0.4em\relax Springer Publishing Company, Incorporated, 2016.

\bibitem{yan}
Y.~Sun, D.~W.~K. Ng, Z.~Ding, and R.~Schober, ``Optimal joint power and
  subcarrier allocation for full-duplex multicarrier non-orthogonal multiple
  access systems,'' \emph{{IEEE} Trans. Commun}, vol.~65, no.~3, pp.
  1077--1091, March 2017.

\bibitem{Joinoptimization}
E.~Che, H.~D. Tuan, and H.~H. Nguyen, ``Joint optimization of cooperative
  beamforming and relay assignment in multi-user wireless relay networks,''
  \emph{IEEE Trans. Wirel. Commun}, vol.~13, no.~10, pp. 5481--5495, Oct 2014.

\bibitem{Boyed}
S.~Boyd and L.~Vandenberghe, \emph{{Convex Optimization}}.\hskip 1em plus 0.5em
  minus 0.4em\relax New York, NY, USA: Cambridge University Press, 2004.

\end{thebibliography}
\bibliographystyle{IEEEtran}
\end{document}